\definecolor{darkblue}{rgb}{0.15,0.15,0.1}
\definecolor{darkblue2}{rgb}{0.2,0.2,0.9}
\definecolor{superdarkblue}{rgb}{0.2,0.2,0.3}
\definecolor{citegreen}{rgb}{0.2,0.2,0.6}
\newcommand\bfo{{\bf 0}}
\definecolor{citegreen}{rgb}{0.2,0.2,0.6}
\definecolor{darkblue}{rgb}{0.1,0.1,0.6}
\definecolor{md}{rgb}{0.7,0.3,0.}
\newcommand\Omg{\Omega}
\definecolor{citegreen}{rgb}{0.2,0.2,0.6}
\definecolor{md}{rgb}{0.7,0.3,0.}
\newcommand\tm{\times}
\newcommand\Dl{\Delta}
\newcommand\OpD{\sfH_{\rm D}^\Omg}
\renewcommand\gg{\gamma}
\newcommand\supp{{\rm supp}\,}
\newcommand\RzD{\sfR_{\rm D}^\Omg(z)}
\newcommand\bb{\beta}
\newcommand\bx{\mathbf{x}}
\renewcommand\and{\qquad\text{and}\qquad}
\newcommand\gzy{g_{z,y}^\Omg}
\newcommand\hzy{h_{z,y}^\Omg}
\newcommand\gy{g_y^\Omg}
\newcommand\ggy{\gg_y^\Omg}
\newcommand\hy{h_y^\Omg}
\newcommand\sm{\setminus}
\newcommand\Op{\sfH_{\aa,y}^\Omg}
\newcommand\frm{\mathfrak{h}_{\aa,y}^\Omg}
\newcommand\frmB{\mathfrak{h}_{\aa,\bfo}^\cB}
\newcommand\Ev{\lm_1^\aa(\Omg,y)}
\newcommand\Ef{u_1^\aa}
\newcommand\EvB{\lm_1^\aa(\cB,\bfo)}
\newcommand\dl{\delta}
\newcommand\nb{\nabla}
\newcommand{\comm}[1]{}
\renewcommand\aa{\alpha}
\newcommand\lm{\lambda}
\newcommand\s{\sigma}
\newcommand\p{\partial}
\newcommand{\spn}{\mathrm{span}\,}
\newcommand\frh{\mathfrak{h}}
\newcommand\eps{\varepsilon}
\newcommand\arr{\rightarrow}
\newcommand\dd{{\mathsf{d}}}
\newcounter{counter_a}
\newenvironment{myenum}{\begin{list}{{\rm(\roman{counter_a})}}%
{\usecounter{counter_a}
\setlength{\itemsep}{1.ex}\setlength{\topsep}{0.8ex}
\setlength{\leftmargin}{5ex}\setlength{\labelwidth}{5ex}}}{\end{list}}
\newcounter{counter_b}
\numberwithin{figure}{section}
\numberwithin{equation}{section}
\theoremstyle{plain}
\newtheorem*{thm*}{Theorem}
\newtheorem{thm}{Theorem}[section]
\newtheorem{lem}[thm]{Lemma}
\newtheorem{prop}[thm]{Proposition}
\newtheorem{cor}[thm]{Corollary}
\theoremstyle{remark}
\newtheorem{remark}[thm]{Remark}
\theoremstyle{plain}
\newcommand\ov{\overline}
\newcommand\wt{\widetilde}
\def\ov{\overline}
\def\frs{{\mathfrak s}}
\def\frt{{\mathfrak t}}
      \def\dC{{\mathbb C}}
      \def\dR{{\mathbb R}}
   \def\sfH{{\mathsf H}}
      \def\sfR{{\mathsf R}}
\def\sfS{{\mathsf S}}   \def\sfT{{\mathsf T}}
\def\cA{{\mathcal A}}   \def\cB{{\mathcal B}}   
\def\cD{{\mathcal D}}      
   \def\cH{{\mathcal H}}   
      \def\cL{{\mathcal L}}
\def\cS{{\mathcal S}}
\newcommand{\dom}{\mathrm{dom}\,}
\def\section{\@startsection{section}{1}\z@{.9\linespacing\@plus\linespacing}%
	{.7\linespacing} {\fontsize{13}{14}\selectfont\bfseries\centering}}
\def\paragraph{\@startsection{paragraph}{4}%
	\z@{0.3em}{-.5em}%
	{$\bullet$ \ \normalfont\itshape}}
\newtheorem{lemma}{Lemma}[section]
\title[Faber-Krahn inequalities with interactions]{\textsc{Faber-Krahn inequalities for Schr\"odinger operators with point and with Coulomb interactions}}
\author[V. Lotoreichik]{Vladimir Lotoreichik}
\address{
Department of Theoretical Physics\\
Nuclear Physics Institute, Czech Academy of Sciences, 
25068 \v{R}e\v{z}, Czech Republic\\
E-mail: {lotoreichik@ujf.cas.cz }
}
\author[A. Michelangeli]{Alessandro Michelangeli}
\address{Institute for Applied Mathematics \\ and Hausdorff Center for Mathematics \\ University of Bonn, \\ Endenicher Allee 60 
D-53115 Bonn, Germany
	E-mail: {michelangeli@iam.uni-bonn.de}
}
\begin{document}

\begin{abstract}
 We obtain new Faber-Krahn-type inequalities for certain perturbations of the Dirichlet Laplacian on a bounded domain. First, we establish a two- and three-dimensional Faber-Krahn inequality for the Schr\"odinger operator with point interaction: the optimiser is the ball with the point interaction supported at its centre. Next, we establish three-dimensional Faber-Krahn inequalities for one- and two-body Schr\"odinger operator with attractive Coulomb interactions, the optimiser being given in terms of Coulomb attraction at the centre of the ball. The proofs of such results are based on symmetric decreasing rearrangement and Steiner rearrangement techniques; in the first model a careful analysis of certain monotonicity properties of the lowest eigenvalue is also needed.
\end{abstract}

\maketitle	

\section{Background and outline}\label{sec:intro}

In this work we produce two types of generalisations of a famous, one century old, optimisation result due to G.~Faber~\cite{Faber-1923} and E.~Krahn~\cite{Krahn-1925}. Whereas our applications concern two distinct operators of interest, the conceptual scheme of the proofs and the technical tools utilised are similar in both cases, which is the reason we should like to present both results on the same footing.

In its original formulation Faber-Krahn inequality states that amongst all domains $\Omega\subset\mathbb{R}^d$, $d\geq 2$, with the same given finite volume, the lowest (principal) eigenvalue of the negative Dirichlet Laplacian is minimised by the ball. This is an archetypal result in the vast and ever growing field of variational methods for eigenvalue approximation and spectral optimisation. The main tool in the proof is the symmetric decreasing rearrangement -- see the monographs \cite{Bandle-book-1980,Kesavan-symm-2006,Baernstein-symm-2019} and the references therein.

The general concept underlying Faber-Krahn-type inequalities is the relation between geometry and spectral properties, an idea that can be traced back to Lord Rayleigh's celebrated conjecture \cite{Rayleigh-sound-1894}.

Historically, the first natural focus in this respect is precisely the Laplacian with Dirichlet boundary conditions on $\Omega$ (by `Laplacian' we shall understand henceforth the differential operator $-\Delta$, the minus sign being included in order to realise the operator as lower semi-bounded on $L^2(\Omega)$). Other settings of interest in the spirit of spectral optimisation, which can mean both the upper and the lower bound depending on the problem, are the Neumann Laplacian, whose lowest non-trivial eigenvalue is maximised by the ball \cite{Szego-1954,Weinberger-1956}, Laplacians with Robin \cite{Daners-2006,Daners-Kennedy-2007,Freitas-2015,Bucur-Freitas-Kennedy-2017Robin,Keady-W-2018,BFNT-2018,Laugesen-2019Robin,Bucur-Giacomini-2019}, Wentzell-Robin \cite{Kennedy-2008,Kennedy-2010}, Stekloff \cite{Weinstock-1954,Brock-2001}, or mixed \cite{Kovaric-miedbc-2014} boundary conditions, and more generally for Laplace-Beltrami operators for domains in compact Riemannian manifolds with various boundary conditions \cite{Li-Yau-1980,Chavel-1984,Xu-1995,Khalile-Lotoreichik-2019,Savo-2020}, just to scratch the surface of a huge and branched out research field. Of course, all this come up with a variety of techniques that may differ substantially from the rearrangement scheme of the original Faber-Krahn inequality. 

Beside investigating different sorts of boundary conditions, also other sorts of differential operators have been studied which give rise to Faber-Krahn-type inequalities, significantly $p$-Laplacians \cite{Bhattacharya-1999,Bucur-Daners-2010,Kennedy-2010,Dai-Fu-2011,Hu-Dai-2014}, magnetic Laplacians \cite{Fournais-Helffer-2019}, and Dirac operators \cite{Benguria-Fournais-Stock-vDB-2017,ABLOB-2020}. Certain analogous spectral optimisation results have been established also for Robin Laplacians on the exterior of compact sets \cite{Kreicik-Lotoreichik-2018-extCompactSets,Kreicik-Lotoreichik-2020-extCompactSets-II,Exner-Lotoreichik-2020} and on unbounded cones \cite{Khalile-Lotoreichik-2019}.

The direction we are concerned with here is the emergence of Faber-Krahn-type inequalities for suitable \emph{perturbations} of the Dirichlet Laplacian on bounded domains $\Omega$.

The first playground one may think of are of course Schr\"{o}dinger operators $-\Delta+V$ for suitable measurable potential $V:\Omega\to\mathbb{R}$. For instance (see, e.g., \cite[Sect.~4]{Benguria-Linde-Loewe-2011}) a straightforward adaptation of Faber-Krahn inequality holds, stating that amongst all $\Omega$'s with same finite volume, the lowest eigenvalue of $-\Delta+V$ with $V$ non-negative in $L^1(\Omega)$ and with Dirichlet boundary conditions always exceeds the lowest eigenvalue of the analogous Schr\"{o}dinger operator on the ball with potential given by the symmetric increasing rearrangement of $V$. In this framework, optimisation (say, of the first eigenvalue, of the fundamental gap, etc.) has been investigated with respect to various classes of potentials at fixed $\Omega$ (see, e.g., the survey in \cite[Chapter 8]{Henrot-Extremum-Problems-2006} and the references therein)

Here our focus splits into two lines. Dirichlet boundary conditions shall be assumed throughout. On the one hand, we are concerned with the question of optimising the lowest eigenvalue of a Schr\"{o}dinger operator with localised impurity. There are significant precursors \cite{CGIKO-2000,Harrell-Kroeger-Kurata-2001,Cupini-Vecchi-2019}, which qualified the optimal placement of an obstacle or a well within a fixed bounded domain $\Omega$, meaning, a positive or negative bump-like potential $V$ supported inside $\Omega$. We push this line further, by modelling the impurity with an operator of point interaction, that is, a singular delta-like perturbation of the Dirichlet Laplacian supported at a point $y\in\Omega$. 
The connection with bump-like potentials of finite size is clear by analogy with the case of a point interaction Hamiltonian on the whole $\mathbb{R}^d$, $d\in\{1,2,3\}$: the latter can be indeed constructed as a suitable limit of Schr\"{o}dinger operators $-\Delta+V_n$ on $L^2(\mathbb{R}^d)$ along a sequence of sufficiently localised and regular potentials $V_n$ shrinking and spiking up to a delta-like profile as $n\to\infty$ \cite{AHK-1981-JOPTH}.

We thus consider a bounded domain $\Omg\subset\dR^d$ with $C^\infty$-boundary, $d\in\{2,3\}$, and the operator $\Op$ on $L^2(\Omega)$ for given $y\in\Omega$ and $\alpha\in\mathbb{R}$, namely the self-adjoint operator of point interaction supported at $y$ and with inverse scattering length $\alpha$. Loosely speaking, $\Op$ corresponds to the formal differential expression 
\[
	-\Delta + \nu_\alpha\delta_y
\]
with Dirichlet boundary condition at $\p\Omg$ and some coupling $\nu_\alpha$ (of which $\alpha$ is a suitable renormalisation). In fact, $\Op$ is rigorously defined as a self-adjoint extension in $L^2(\Omega)$ of the Dirichlet Laplacian restricted to smooth functions vanishing on neighbourhoods of $y$, a construction obtained in \cite{Blanchard-figari-Mantile-2007} (see also~\cite{CdV82}). Basic spectral properties and an amount of further results on $\Op$ were established in~\cite{Blanchard-figari-Mantile-2007,Exner-Mantile-2007,Posilicano-2013}. As the above singular perturbation at $y$ does not alter the lower semi-boundedness of the unperturbed Dirichlet Laplacian, it still makes sense to investigate the principal eigenvalue $\lm_1^\aa(\Omg,y)$ of $\Op$. In particular, \cite{Exner-Mantile-2007} proved strict monotonicity of $\lm_1^\aa(\Omg,y)$ with respect to certain directions along which $y$ is moved, thus a first partial answer to the question where to locate a point interaction of given strength so as to minimise $\lm_1^\aa(\Omg,y)$.

Our first main result in the present analysis is the solution to a problem that merges the above question with the isoperimetric question for domains with the same volume, namely the problem of optimising $\lm_1^\aa(\Omg,y)$ with respect to a \emph{simultaneous} variation of the domain $\Omg$ and the point $y\in\Omg$.

We demonstrate the Faber-Krahn inequality
\begin{equation*}
	\lm_1^\aa(\cB,\bfo)\le \lm_1^\aa(\Omg,y), 
\end{equation*}
where $\cB\subset\dR^d$ is the ball of the same volume as $\Omg$ centred at the origin $\bfo\in\dR^d$. This is achieved by means of rearrangement techniques (for which we provide a concise survey in Section \ref{sec:preliminaries}), combined with an accurate analysis of certain crucial features of $\lm_1^\aa(\Omg,y)$ and its associated eigenfunction, in particular the monotonicity of the former with respect to $\alpha$, and some convenient representations of the latter (an analysis that we develop in Sections \ref{sec:DirichletLaplacian} and \ref{sec:lowesteigenvalue}). We present the proof of this first main result in Section \ref{sec:FaberKrahnForHDpoint}. 
It is worth emphasizing that we demonstrate the above Faber-Krahn inequality following two alternative routes: a general one that relies on certain estimates available in the literature for Green functions for Dirichlet Laplacians on domains, and an additional one, applicable when $\lm_1^\aa(\Omg,y)>0$, that has the virtue of exploiting rearrangement techniques as for the original Faber-Krahn, and in fact allows us to re-prove independently the above mentioned Green function estimates. In this respect, it is remarkable that the latter are so intimately connected with the spectral theory of the Hamiltonian with a point interaction in bounded domain.

In the second line of investigation of this work, on the other hand, we are concerned with the optimisation of the lowest eigenvalue of certain three-dimensional Schr\"{o}dinger operators on bounded domain and with attractive Coulomb potential.

We actually examine two models. First, for generic bounded domain $\Omg\subset\dR^3$ with $C^\infty$-boundary, $y\in\Omega$, and $q>0$, we consider the operator
\[
 \sfT_{q,y}^\Omg=-\Delta_x -\frac{q}{|x -y|}
\]
in its natural self-adjoint realisation on $L^2(\Omega)$ with Dirichlet boundary conditions at $\partial\Omega$. Here too the Coulomb perturbation produces a lower semi-bounded operator, with lowest eigenvalue $\mu_1^q(\Omg,y)$. For the latter we establish the Faber-Krahn inequality
\[
 \mu_1^q(\cB,\bfo) \le \mu_1^q(\Omg,y)\,.
\]
Thus, the configuration with Coulomb attraction at the centre of the ball minimises the principal Schr\"{o}dinger-Coulomb eigenvalue among all domains with equal volume and generic interaction centre $y$.

Next, we examine the \emph{two-body} counterpart of the previous model. Whereas $\Op$ and $\sfT_{q,y}^\Omg$ above are naturally interpreted as Hamiltonians for one non-relativistic quantum particle confined in $\Omega$ and subject to an interaction (of contact or Coulomb type) centred at a fixed point $y$, we now study the quantum Hamiltonian for two particles confined in $\Omega$ and coupled among themselves by a two-body, isotropic, Coulomb attraction of intensity $q > 0$. The Hamiltonian of interest becomes
\[
 \sfT_q^\Omg=-\Delta_{x_1} - \Delta_{x_2} -\frac{q}{|x_1-x_2|},
\]
canonically realised as a self-adjoint operator on  $L^2(\Omg\tm\Omg)$ with Dirichlet boundary conditions at $\partial(\Omega\tm\Omg)$. Here one should think of $q$ as the product of the (absolute values of) the charges of the two particles. Again, $\sfT_q^\Omg$ is lower semi-bounded, with lowest eigenvalue $\nu_1^q(\Omg)$. In the optimisation of $\nu_1^q(\Omg)$ over all $\Omega$'s with the same volume we establish the lower bound
\[
 2\mu_1^{q/2}(\cB,\bfo)\le \nu_1^q(\Omg)\,.
\]
The l.h.s.~above formally expresses the sum of the lowest energy levels of two identical particles in the ball $\cB$, each of which evolves uncoupled from the other and is subject instead to a Coulomb attraction from the centre of the ball, where now the product of the charge of the centre and the charge of each particle is half of the original $q$.

Such analysis is carried on in Section \ref{sec:Coulomb}. The conceptual scheme for the one-body case goes along the same line as for our Faber-Krahn inequality for the one-body point interaction. Exporting that scheme to the two-body case requires the replacement of the standard symmetric rearrangement tools with the Steiner rearrangement (which is also concisely reviewed in Section \ref{sec:preliminaries}).

In conclusion, we provide three new non-trivial examples of variational eigenvalue estimates for Schr\"{o}dinger self-adjoint operators on bounded domains, the first two of which are in the form of novel Faber-Krahn-type inequalities.

As the models we considered here appear not to have had previous scrutiny as far as spectral optimisation is concerned, an amount of interesting open questions obviously arise, on some of which we are committed to, as counterparts of the corresponding analysis for the free Laplacian on $L^2(\Omega)$ with given boundary conditions of self-adjointness. This includes, for example, the question of uniqueness of the minimiser, or the problem of optimisation over a restricted class of domains with definite geometry (such as parallelepipeds with the same volume of a prescribed cube), or the behaviour with respect to different boundary conditions other than Dirichlet, or the optimisation of the fundamental gap, just to mention a few typical ones. It would be of interest also to supplement our analysis of $\sfT_{q,y}^\Omg$ and $\sfT_q^\Omg$ by including the case of negative coupling $q<0$, and by comparing $\nu_1^q(\Omg)$ with $\nu_1^q(\cB)$.

We believe that these attractive topics deserve future investigation.

\bigskip

\textbf{Notation.} Beside an amount of fairly standard notation, as well as further convenient shorthand that will be introduced in due time, we shall adopt the following conventions throughout.

${\;}$
\!\!\!\!\!\!\!\!\!\!\!\!\!\begin{tabular}{ccl}
 $|\cA|$ & & Lebesgue measure of a measurable set $\cA\subset\mathbb{R}^d$ \\
 $\partial\Omega$ & & boundary of a domain $\Omega\subset\mathbb{R}^d$ \\
 $\mathrm{supp}\,f$ & & support of the function $f$ \\
 $(\cdot,\cdot)_{L^2(\Omega)}$ & & $L^2$-scalar product, anti-linear in the first entry, linear in the second \\
  $\mathbbm{1}$ & & identity operator (on the space that will be clear from the context) \\
  $\mathrm{dom}$ & & domain of an operator or a quadratic form \\
 $\sigma(T)$ & & spectrum of the operator $T$ w.r.t.~the underlying Hilbert space \\
  $T_1\otimes T_2$ & & tensor products of operators $T_1,T_2$ (w.r.t.~the underlying Hilbert spaces)
\end{tabular}

Unless when it becomes relevant to emphasize that, we shall tacitly understand all identities $f=g$ between $L^2$-functions in the 	sense of almost everywhere identities.

\section{Preparatory materials}\label{sec:preliminaries}

\subsection{Symmetric decreasing rearrangement} Let us start with introducing the symmetric decreasing rearrangement and recalling some of its fundamental properties. This is standard material; we refer to the monographs~\cite{Bandle-book-1980,Lieb-Loss-Analysis,Kesavan-symm-2006,Leoni-2017-Sobolev} for additional details.

Let $\cA$ be a measurable set of finite
volume in the Euclidean space $\dR^d$ of dimension $d \ge 2$. Its \emph{symmetric rearrangement}
$\cA^*$ is the open
ball $\cB\subset\dR^d$ centred at the origin $\bfo\in\dR^d$ and such that
$|\cA| = |\cB|$.

Let $u\colon\dR^d\arr \dR$ be a non-negative measurable function
that vanishes at infinity, in the sense
that all its positive level sets have
finite measure:
\begin{equation}\label{eq:vanish}
	\big|\{x\in\dR^d\,\big|\, u(x) > t\}\big| < \infty,
	\qquad\forall\, t >0.
\end{equation}
We define the \emph{symmetric decreasing
rearrangement} $u^*$ of $u$ by symmetrizing
its level sets as
\begin{equation}
	u^*(x) := \int_0^\infty \chi_{\{u > t\}^*}(x)\dd t.
\end{equation}
Here $\chi_\cA\colon \dR^d\arr\dR$ is the characteristic function
of a measurable set $\cA\subset\dR^d$.

The rearrangement $u^*$ has a number of straightforward properties.

\begin{lem}\label{lem:rearr}
	Let $u\colon \dR^d\arr\dR$, $d \ge 2$, be a non-negative measurable function vanishing at infinity. Let $\cA\subset\dR^d$ be a measurable
	set of finite volume. Then:
	\begin{myenum}
	\item $u^*$ is non-negative;
	\item $u^*$ is radially symmetric and non-increasing;
	\item $u$ and $u^*$ are equi-measurable, i.e.,
	\[
	 |\{x\in\dR^d\,\big|\, u(x) > t\}| = |\{x\in\dR^d\,\big|\, u^*(x) > t\}|
	\]
	for all $t > 0$;
		\item $\supp u \subset\cA$ implies $\supp u^*\subset \cA^*$;
		\item $(u^*)^2 = (u^2)^*$.
	\end{myenum}
\end{lem}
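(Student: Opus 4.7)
The plan is to derive each of the five assertions directly from the definition $u^*(x)=\int_0^\infty \chi_{\{u>t\}^*}(x)\,\dd t$, combined with the following elementary fact about symmetric rearrangement of sets: if $\cA\subset\cA'$ and both have finite measure, then $\cA^*\subset(\cA')^*$, together with the defining identity $|\cA^*|=|\cA|$. Items (i), (ii) and (iv) I would handle as immediate observations. Item (i) holds because the integrand is non-negative. Item (ii) holds because each $\chi_{\{u>t\}^*}$ is the indicator of an open ball around $\bfo$, so it is radially symmetric and non-increasing along every ray, and these properties are preserved by integration in $t$. Item (iv) holds because $\supp u\subset\cA$ implies $\{u>t\}\subset\cA$ for all $t>0$, hence $\{u>t\}^*\subset\cA^*$ by the above monotonicity, and integration then yields $\supp u^*\subset\cA^*$.

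The heart of the argument will be item (iii), which I would prove as follows. For fixed $x$, the map $t\mapsto\chi_{\{u>t\}^*}(x)$ is non-increasing in $t$, since the balls $\{u>t\}^*$ shrink as $t$ grows (their measures $|\{u>t\}|$ do so). Hence there exists some $\tau(x)\ge 0$ with $\chi_{\{u>t\}^*}(x)=1$ for $t<\tau(x)$ and $0$ for $t>\tau(x)$, which gives $u^*(x)=\tau(x)$ and, in turn, the level-set identity $\{u^*>s\}=\bigcup_{t>s}\{u>t\}^*$. Since this is an increasing union of nested open balls as $t\downarrow s$, continuity of Lebesgue measure from below yields $|\{u^*>s\}|=\lim_{t\downarrow s}|\{u>t\}^*|=\lim_{t\downarrow s}|\{u>t\}|$, where the last step uses $|\cA^*|=|\cA|$. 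A second application of continuity of measure, now to the identity $\{u>s\}=\bigcup_{t>s}\{u>t\}$, gives $\lim_{t\downarrow s}|\{u>t\}|=|\{u>s\}|$, and (iii) is complete.

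Finally, item (v) I would deduce from (iii) by a uniqueness argument. Both $(u^*)^2$ and $(u^2)^*$ are non-negative, radially symmetric, and non-increasing in $|x|$; moreover each is equi-measurable with $u^2$. For $(u^2)^*$ this is (iii) applied to $u^2$; for $(u^*)^2$ it follows from $\{(u^*)^2>t\}=\{u^*>\sqrt{t}\}$ together with (iii) applied to $u$, yielding $|\{(u^*)^2>t\}|=|\{u>\sqrt{t}\}|=|\{u^2>t\}|$. Any two non-negative, radial, radially non-increasing functions that are equi-measurable coincide almost everywhere, since each is determined by the common distribution function $t\mapsto|\{u^2>t\}|$; this gives the identity. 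The only mildly technical step of the whole proof is the continuity-of-measure step in (iii); the rest is routine bookkeeping from the definition.
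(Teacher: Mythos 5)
The paper offers no proof of this lemma: it is announced as a list of ``straightforward properties'' of the symmetric decreasing rearrangement, with the reader implicitly referred to the standard monographs cited just above (Bandle, Lieb--Loss, Kesavan, Leoni). There is therefore no internal argument of the authors to compare against, so I can only assess your proof on its own terms. It is correct. Items (i), (ii), (iv) are indeed immediate from the layer-cake definition in the way you indicate, and your treatment of (iii) -- identifying $u^*(x)$ with the jump point $\tau(x)$ of the non-increasing function $t\mapsto\chi_{\{u>t\}^*}(x)$, rewriting the super-level set as the nested union $\{u^*>s\}=\bigcup_{t>s}\{u>t\}^*$, and invoking continuity of Lebesgue measure from below on both sides -- is the standard argument and is carried out cleanly; note that you implicitly rely on $s>0$ to ensure the measures $|\{u>t\}|$ along the approach are finite, which is exactly the regime the lemma asserts (iii) for.

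For (v) your uniqueness argument (any two radial, radially non-increasing, equi-measurable functions coincide a.e.) is sound, though it leans on a fact you assert without proof. A slightly more self-contained alternative, which stays entirely inside the layer-cake formula and avoids the uniqueness step, is to use $\{u^2>t\}^*=\{u>\sqrt{t}\}^*$ and substitute $t=s^2$:
\[
(u^2)^*(x)=\int_0^\infty\chi_{\{u^2>t\}^*}(x)\,\dd t=\int_0^\infty\chi_{\{u>s\}^*}(x)\,2s\,\dd s=\int_0^{u^*(x)}2s\,\dd s=(u^*(x))^2,
\]
where in the penultimate equality one uses, as in your proof of (iii), that $\chi_{\{u>s\}^*}(x)=1$ for $s<u^*(x)$ and $=0$ for $s>u^*(x)$. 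Either route is acceptable; the computation above is the one most monograph treatments use.
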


Let us collect further standard properties of the symmetric decreasing rearrangement that we shall use throughout. 

\begin{prop}\label{prop:rearr0}\cite[Theorem 3.4 and Lemma 7.17]{Lieb-Loss-Analysis}
	Let $u,v\colon \dR^d\arr\dR$ be non-negative measurable functions vanishing at infinity. Then the following hold.
	\begin{myenum}
		\item $\|u\|_{L^2(\dR^d)} 
		=
		\|u^*\|_{L^2(\dR^d)}$ (conservation of $L^2$-norm).
		\item 
		$\int_{\dR^d} u(x)v(x)\dd x \le 
		\int_{\dR^d} u^*(x)v^*(x)\dd x$
		(Hardy-Littlewood inequality).
		\item If
		$\nb u \in L^2(\dR^d)$ exists in the
		sense of  distributions, then
		$\nb u^*$ has the same property
		\[
			\|\nb u\|_{L^2(\dR^d)} \ge
		\|\nb u^*\|_{L^2(\dR^d)}\qquad
		\text{(P\`{o}lya-Szeg\H{o} inequality).}
		\]
	\end{myenum}	
	In particular, $u\in H^1(\dR^d)$ implies that
	$u^*\in H^1(\dR^d)$ as well.
\end{prop}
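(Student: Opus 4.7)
The plan is to prove the three items in turn, each by reduction to more elementary facts about level sets; items (i) and (ii) are essentially bookkeeping on top of Lemma~\ref{lem:rearr}, while (iii) is the genuinely nontrivial one and requires the isoperimetric inequality.

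For (i), I would invoke item (v) of Lemma~\ref{lem:rearr} together with the equimeasurability in item (iii) of the same lemma: since $(u^2)^* = (u^*)^2$, the layer-cake representation
\[
\int_{\dR^d} u(x)^2\,dx \;=\;\int_0^\infty\big|\{u^2>s\}\big|\,ds
\]
gives the same value for $u$ as for $u^*$, proving conservation of the $L^2$-norm. For (ii), I would again use layer-cake for both $u$ and $v$ to write
\[
\int_{\dR^d} u(x)v(x)\,dx \;=\;\int_0^\infty\!\!\int_0^\infty \big|\{u>s\}\cap\{v>t\}\big|\,ds\,dt,
\]
and observe that for any two measurable sets $A,B$ of finite measure one has $|A\cap B|\le\min\{|A|,|B|\}$ with equality for the corresponding concentric balls $A^*,B^*$ (since the smaller ball is contained in the larger). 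Applying this pointwise in $(s,t)$ to the level sets of $u$ and $v$ yields the Hardy--Littlewood inequality at once.

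The serious work is in (iii), the P\'olya--Szeg\H{o} inequality, and here I would follow the standard coarea route. By a smoothing/truncation argument one may reduce to $u$ smooth, compactly supported and nonnegative, so that Sard's theorem applies to $u$ and the distribution function $\mu(t):=|\{u>t\}|$ is absolutely continuous with $-\mu'(t)=\int_{\{u=t\}}|\nabla u|^{-1}\,d\mathcal H^{d-1}$ for a.e.\ $t>0$. The coarea formula then gives
\[
\int_{\dR^d}|\nabla u|^2\,dx \;=\;\int_0^\infty\!\!\int_{\{u=t\}}|\nabla u|\,d\mathcal H^{d-1}\,dt,
\]
and the Cauchy--Schwarz inequality applied on the level set $\{u=t\}$ produces
\[
P(t)^2 \;\le\; \Bigl(-\mu'(t)\Bigr)\int_{\{u=t\}}|\nabla u|\,d\mathcal H^{d-1},
\]
where $P(t)$ denotes the perimeter of $\{u>t\}$. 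The two key inputs now combine: the classical isoperimetric inequality $P(t)\ge P^*(t)$, with $P^*(t)$ the perimeter of the ball $\{u>t\}^*$ (which has the same volume $\mu(t)$), and the fact that on each level set of the radial function $u^*$ the gradient $|\nabla u^*|$ is constant, so that the Cauchy--Schwarz inequality above holds with equality for $u^*$. Putting these together level-wise and integrating in $t$ yields $\int|\nabla u|^2\ge\int|\nabla u^*|^2$, as desired.

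The last sentence of the statement, that $u\in H^1(\dR^d)$ implies $u^*\in H^1(\dR^d)$, is an immediate consequence of (i) applied to $u$ itself (since $|u^*|=|u|^*$ in $L^2$-norm) together with (iii). The main obstacle is, as anticipated, the P\'olya--Szeg\H{o} step: the technical delicacies lie in justifying the coarea computation and the a.e.\ identity for $\mu'(t)$ for a general $u\in H^1$ via approximation, since outside the smooth setting one must control the contribution of critical level sets and handle the set where $\nabla u^*$ vanishes.
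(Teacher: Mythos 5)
This proposition is stated in the paper purely by citation to Lieb--Loss \cite[Theorem 3.4 and Lemma 7.17]{Lieb-Loss-Analysis}; the authors give no proof of their own, so there is no in-paper argument to compare against. Your proof is correct as a standalone argument: items (i) and (ii) are exactly the textbook layer-cake reductions, using $(u^*)^2 = (u^2)^*$ plus equimeasurability for (i), and the bound $|A\cap B|\le\min\{|A|,|B|\}=|A^*\cap B^*|$ for concentric balls in (ii).

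It is worth noting, though, that for (iii) you go a genuinely different route than the source the paper cites. You use the classical coarea/Talenti argument: coarea formula, Cauchy--Schwarz on each level set, the isoperimetric inequality $P(t)\ge P^*(t)$, and the observation that Cauchy--Schwarz is saturated for the radial rearrangement. That is a valid and conceptually transparent proof, but it requires exactly the approximation and regularity bookkeeping you flag at the end (Sard, absolute continuity of the distribution function, critical level sets, degenerate $\nabla u^*$). Lieb--Loss's Lemma~7.17 instead proves P\'olya--Szeg\H{o} by writing the Dirichlet energy as a heat-kernel limit,
\[
\|\nabla u\|_{L^2}^2=\lim_{t\to 0^+}\frac{1}{t}\Big(\|u\|_{L^2}^2-\big(u,e^{t\Delta}u\big)_{L^2}\Big),
\]
and then applying the Riesz rearrangement inequality to $\big(u,e^{t\Delta}u\big)$, since the Gaussian is symmetric decreasing. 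That route sidesteps the coarea technicalities entirely and is arguably better suited to a general $H^1$ function, at the cost of needing the Riesz rearrangement inequality as a black box. Either route is acceptable here; if you stick with coarea, you should at least name the standard approximation scheme (e.g.\ mollification plus truncation, monotone convergence for the gradients) rather than leaving it at ``a smoothing/truncation argument'', since that is precisely where the proof is delicate.

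Your final remark that $u\in H^1$ implies $u^*\in H^1$ is indeed just (i) plus (iii), though the parenthetical ``$|u^*|=|u|^*$'' is redundant here: the proposition already assumes $u\ge 0$, so $u^*=|u|^*$ trivially.
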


In view of Lemma~\ref{lem:rearr}\,(iv),	
the operation of taking symmetric decreasing rearrangement can be naturally extended to functions
defined on domains.

Let $\Omg\subset\dR^d$ be a bounded domain
with $C^\infty$-smooth boundary.
For a non-negative measurable function $u\colon \Omg\arr\dR$, 
we denote by $\wt{u}\colon\dR^d\arr\dR$ its extension by zero to the whole $\dR^d$
and define the \emph{symmetric rearrangement $u^*$ of $u$} as
\[
u^* := \wt{u}^*|_{\Omg^*}\colon \Omg^*\arr\dR.
\]
In this respect Proposition \ref{prop:rearr0} has the following corollary.

\begin{cor}\label{prop:rearr}
	Let $\Omg\subset\dR^d$, $d \ge 2$, be a bounded domain with $C^\infty$-smooth boundary and let the ball $\cB := \Omg^*$ be its symmetric
	rearrangement. 	
	Let $u,v\colon \Omg\arr\dR$ be non-negative measurable functions. Then:
	\begin{myenum}
		\item $\|u\|_{L^2(\Omg)} 
		=
		\|u^*\|_{L^2(\cB)}$;
		\item 
		$\int_{\Omg} u(x)v(x)\dd x \le 
		\int_{\cB} u^*(x)v^*(x)\dd x$;
		\item if additionally
		$u \in H^1_0(\Omg)$, then
		$u^* \in H^1_0(\cB)$ and
		$\|\nb u\|_{L^2(\Omg)} \ge
			\|\nb u^*\|_{L^2(\cB)}$.
	\end{myenum}	
\end{cor}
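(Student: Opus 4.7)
The plan is to reduce each of the three claims to the corresponding statement of Proposition \ref{prop:rearr0} applied to the zero extensions $\widetilde{u},\widetilde{v}:\mathbb{R}^d\to\mathbb{R}$. Since $\Omega$ is bounded, every positive level set of $\widetilde{u}$ (resp.\ $\widetilde{v}$) is contained in $\Omega$ and thus has finite measure, so both extensions vanish at infinity in the sense of \eqref{eq:vanish} and their symmetric decreasing rearrangements are well-defined. By Lemma \ref{lem:rearr}(iv), $\mathrm{supp}\,\widetilde{u}^{\,*}\subset\overline{\Omega^*}=\overline{\mathcal{B}}$ (and likewise for $v$), so by construction $u^*=\widetilde{u}^{\,*}|_{\mathcal{B}}$ and $v^*=\widetilde{v}^{\,*}|_{\mathcal{B}}$.

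Claim (i) is then immediate from Proposition \ref{prop:rearr0}(i):
\[
\|u\|_{L^2(\Omega)}=\|\widetilde{u}\|_{L^2(\mathbb{R}^d)}=\|\widetilde{u}^{\,*}\|_{L^2(\mathbb{R}^d)}=\|u^*\|_{L^2(\mathcal{B})}.
\]
Claim (ii) follows analogously from the Hardy-Littlewood inequality in Proposition \ref{prop:rearr0}(ii):
\[
\int_{\Omega}u\,v\,dx=\int_{\mathbb{R}^d}\widetilde{u}\,\widetilde{v}\,dx\le\int_{\mathbb{R}^d}\widetilde{u}^{\,*}\widetilde{v}^{\,*}\,dx=\int_{\mathcal{B}}u^*v^*\,dx.
\]

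For (iii), I would first invoke the standard fact that, since $\Omega$ is smooth (Lipschitz would suffice), extension by zero is an isometric linear map $H^1_0(\Omega)\hookrightarrow H^1(\mathbb{R}^d)$; hence $\widetilde{u}\in H^1(\mathbb{R}^d)$ with $\|\nabla\widetilde{u}\|_{L^2(\mathbb{R}^d)}=\|\nabla u\|_{L^2(\Omega)}$. Proposition \ref{prop:rearr0}(iii) then yields $\widetilde{u}^{\,*}\in H^1(\mathbb{R}^d)$ together with $\|\nabla\widetilde{u}^{\,*}\|_{L^2(\mathbb{R}^d)}\le\|\nabla\widetilde{u}\|_{L^2(\mathbb{R}^d)}$. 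Because $\widetilde{u}^{\,*}$ vanishes outside the (smooth) ball $\mathcal{B}$, its restriction $u^*$ belongs to $H^1_0(\mathcal{B})$ with $\|\nabla u^*\|_{L^2(\mathcal{B})}=\|\nabla\widetilde{u}^{\,*}\|_{L^2(\mathbb{R}^d)}$, and the desired gradient inequality follows.

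The only non-cosmetic step is the Sobolev-space bookkeeping in (iii): one must verify both that the zero extension of an $H^1_0$-function lies in $H^1(\mathbb{R}^d)$ and that, conversely, an $H^1(\mathbb{R}^d)$-function supported in $\overline{\mathcal{B}}$ restricts to an $H^1_0(\mathcal{B})$-function. Both assertions are classical consequences of the regularity of $\partial\Omega$ and $\partial\mathcal{B}$; an alternative direct argument for the second inclusion exploits the fact that $\widetilde{u}^{\,*}$ is radial and non-increasing, so that truncations of the form $(u^*-\varepsilon)_+$ are compactly supported in $\mathcal{B}$ and approximate $u^*$ in $H^1(\mathcal{B})$ as $\varepsilon\downarrow 0$.
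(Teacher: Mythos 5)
Your proof is correct and follows exactly the route implicit in the paper: since the paper defines $u^*$ via the zero extension $\widetilde{u}$ and then states Corollary~\ref{prop:rearr} as an immediate consequence of Proposition~\ref{prop:rearr0} without writing out a proof, the reduction you perform — extending by zero, applying the $\mathbb{R}^d$-statements, and restricting back to $\mathcal{B}$ — is precisely the intended argument. Your careful handling of the Sobolev bookkeeping in (iii) (zero-extension isometry $H^1_0(\Omega)\hookrightarrow H^1(\mathbb{R}^d)$, and the converse restriction to $H^1_0(\mathcal{B})$ for an $H^1(\mathbb{R}^d)$-function supported in $\overline{\mathcal{B}}$) supplies the only non-trivial step the paper leaves tacit.
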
	

Next, following the lines of~\cite{Brascamp-Lieb-Luttinger-1974,Brock-Solynin-2000,Capriani-2014-Steiner}, we introduce for a non-negative measurable function $u = u(x_1,x_2)\colon\dR^{2d}\arr\dR$, $x_1,x_2\in\dR^d$,  $d \ge 2$  its \emph{Steiner rearrangements}
with respect to the first $d$ and the last $d$ variables, that is,
\begin{equation}\label{eq:Steiner}
 \begin{split}
  (S_1u)(\cdot,x_2) \;&=\; (u(\cdot,x_2))^* \\
  (S_2u)(x_1,\cdot) \;&=\; (u(x_1,\cdot))^*\,.
 \end{split}
\end{equation}
Here, we implicitly assume that $u(x_1,\cdot)$ and $u(\cdot, x_2)$ are vanishing at infinity for almost all $x_1\in\dR^d$ and $x_2\in\dR^d$, respectively.

These rearrangements have an amount of properties 
reminiscent of those for the standard symmetric decreasing
rearrangement and follow from the latter via simple arguments (see \cite[Theorem 8.2]{Brock-Solynin-2000}).

\begin{prop}
	\label{prop:rearr2}
	Let $u\in H^1(\dR^{2d})$ be real-valued and non-negative.
	Let $\cA\subset\dR^d$ be a measurable
	set of finite volume.
	Then
	\[
	 S_1u,\;S_2u, \;S_1S_2u, \;S_2S_1u\;\in\; H^1(\dR^{2d}).
	\]
Moreover, the following hold.
	\begin{myenum}
		\item If $\supp u \subset\cA\tm\cA$, then 
		$\supp (S_1u)\subset \cA^*\tm\cA$,
		$\supp (S_2u_2)\subset \cA\tm\cA^*$
		and $\supp (S_1S_2u), \supp(S_2S_1 u) \subset\cA^*\tm\cA^*$.
		\item 
		$(S_1u)^2 = S_1u^2$,
		$(S_2u)^2 = S_2u^2$,
		$(S_1S_2u)^2 = S_1S_2u^2$,
		and
		$(S_2S_1u)^2 = S_2S_1u^2$.
		\item If $u(x_1,x_2) = u(x_2,x_1)$ for a.e. $x_1,x_2\in\dR^d$, then $S_1S_2u = S_2S_1u$. 
		\item
		$\|u\|_{L^2(\dR^{2d})} 
		=
		\|S_1u\|_{L^2(\dR^{2d})}
		=
		\|S_2u\|_{L^2(\dR^{2d})}
		=
		\|S_1S_2u\|_{L^2(\dR^{2d})}
		=
		\|S_2S_1u\|_{L^2(\dR^{2d})}
		$.
		\item 
		$\|\nb u\|_{L^2(\dR^{2d})} \ge
			\|\nb S_1S_2u\|_{L^2(\dR^{2d})}$
		and	
		$\|\nb u\|_{L^2(\dR^{2d})}  \ge
		\|\nb S_2S_1u\|_{L^2(\dR^{2d})}$.
	\end{myenum}	
\end{prop}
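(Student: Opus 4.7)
The strategy throughout is to reduce each claim to the corresponding property of the ordinary symmetric decreasing rearrangement on $\dR^d$ (Lemma~\ref{lem:rearr} and Proposition~\ref{prop:rearr0}), applied fiberwise in $x_1$ or in $x_2$, and to assemble the fiberwise identity or inequality by Fubini. Joint measurability of $S_1u$ and $S_2u$ on $\dR^{2d}$ (hence of the compositions) is settled first by the layer-cake representation together with Fubini applied to the level sets of $u$.

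Properties (i), (ii), (iv) then follow almost immediately. For (i), if $\supp u\subset\cA\tm\cA$ then $u(\cdot,x_2)\equiv 0$ for $x_2\notin\cA$ and $\supp u(\cdot,x_2)\subset\cA$ for $x_2\in\cA$; Lemma~\ref{lem:rearr}(iv) places $\supp(u(\cdot,x_2))^*\subset\cA^*$ for each such $x_2$, hence $\supp S_1u\subset\cA^*\tm\cA$, and iterating with $S_2$ delivers the statements for $S_1S_2u$ and $S_2S_1u$. Property (ii) is the fiberwise restatement of Lemma~\ref{lem:rearr}(v) in each variable in turn. Property (iv) comes from fiberwise $L^2(\dR^d)$-norm conservation (Proposition~\ref{prop:rearr0}(i)) integrated over the remaining variable via Fubini. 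For (iii) the key ingredient is the commutation $\sigma\circ S_1=S_2\circ\sigma$, with $\sigma(x_1,x_2):=(x_2,x_1)$, which is immediate from the defining formulas; together with $\sigma u=u$ and a careful tracking of how the one-variable rearrangement reacts to the swap, this matches $S_2S_1u$ with $S_1S_2u$.

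The step with genuine content is (v), the P\`{o}lya-Szeg\H{o} inequality for the composition. A purely fiberwise application of Proposition~\ref{prop:rearr0}(iii) controls only the component of $\nabla u$ in the variable being rearranged; the tangential component requires the full P\`{o}lya-Szeg\H{o} inequality for Steiner symmetrization, which is where I expect the proof to be non-routine. I would therefore invoke the classical Steiner-symmetrization inequalities of \cite{Brascamp-Lieb-Luttinger-1974,Brock-Solynin-2000} separately for $S_1$ and for $S_2$, obtaining $S_ju\in H^1(\dR^{2d})$ and $\|\nabla u\|_{L^2(\dR^{2d})}\ge\|\nabla S_j u\|_{L^2(\dR^{2d})}$ for $j=1,2$, and then chain them as
\[
	\|\nabla u\|_{L^2(\dR^{2d})}\;\ge\;\|\nabla S_2 u\|_{L^2(\dR^{2d})}\;\ge\;\|\nabla S_1 S_2 u\|_{L^2(\dR^{2d})},
\]
and symmetrically for $S_2S_1u$. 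Membership of all four rearrangements in $H^1(\dR^{2d})$ is then a byproduct of (iv) and these chained gradient bounds.
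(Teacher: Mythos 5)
The paper does not prove this proposition: it simply refers to \cite[Theorem 8.2]{Brock-Solynin-2000} and remarks that the properties ``follow from the latter via simple arguments.'' Your proposal is thus supplying details the authors omit rather than paralleling an explicit argument in the text. For items (i), (ii) and (iv), your fiberwise reduction assembled by Fubini is exactly what one expects and is correct. For (v) you rightly identify the key point that a naive fiberwise P\`olya--Szeg\H{o} only controls the gradient block in the rearranged variable, so the full P\`olya--Szeg\H{o} inequality for a single Steiner symmetrization (precisely what the cited references provide) must be invoked; chaining two such inequalities, one in each variable, is the correct route and also gives the $H^1$-membership claims.

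Item (iii) is where a genuine gap appears, and it is not one that ``careful tracking'' can close. From the commutation $\sigma\circ S_1 = S_2\circ\sigma$ and $\sigma u = u$ one gets $S_2 S_1 u = \sigma(S_1 S_2 u)$, i.e.\ $(S_2 S_1 u)(x_1,x_2) = (S_1 S_2 u)(x_2,x_1)$, but to conclude $S_1 S_2 u = S_2 S_1 u$ you would additionally need $S_1 S_2 u$ to be $\sigma$-invariant, which does not follow. In fact it can fail: take $u(x_1,x_2)=\phi_1(x_1)\phi_2(x_2)+\phi_2(x_1)\phi_1(x_2)$ with $\phi_1,\phi_2\ge 0$ smooth bumps on disjoint sets of \emph{unequal} volume; a direct computation of the level sets shows $\mathrm{supp}(S_1S_2u)$ and $\mathrm{supp}(S_2S_1u)$ are each the $\sigma$-reflection of the other and differ on a set of positive measure. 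So the commutation relation cannot by itself deliver (iii), and the identity as literally stated is doubtful. What \emph{is} available from your commutation argument is the weaker relation $S_2 S_1 u=\sigma(S_1 S_2 u)$, and one can check that this weaker relation (together with (iv) and Hardy--Littlewood) already suffices for the application in the proof of Theorem~\ref{thm:Coulomb2}.
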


\subsection{Green functions}
Let us now recall basic properties of  
the Green function associated with the Dirichlet Laplacian on a bounded smooth domain, focusing in particular on the connections between Green function and symmetric decreasing rearrangement \cite{Bandle-78}.

Let $\Omg\subset\dR^d$, $d\in\{2,3\}$, be a bounded domain
with $C^\infty$-smooth boundary. Consider
the self-adjoint Dirichlet Laplacian $\OpD$ in the Hilbert space $L^2(\Omg)$
\begin{equation}\label{eq:DLapl}
\begin{split}
 \dom\OpD \;&:=\; H^2(\Omega)\cap H^1_0(\Omega) \\
 \OpD u \;&:=\; -\Delta u\,.
\end{split}
\end{equation}
The
operator $\OpD$ is lower semi-bounded and with purely discrete spectrum, and its lowest eigenvalue 
$\lm_1(\Omg)$ is strictly positive. Let
\begin{equation}\label{eq:resolvent}
	\sfR_{\rm D}^\Omg(z) := \big(\OpD - z\big)^{-1},
	\qquad
	z\in \dC\sm\s(\OpD),
\end{equation}
be the resolvent of $\OpD$ at the point $z$. $\sfR_{\rm D}^\Omg(z)$ acts  on $L^2(\Omega)$ as a compact integral operator with kernel $G_z^\Omg \colon \ov\Omg\tm\ov\Omg\arr\dR$, called the \emph{Green function} associated with $\OpD$. In the sense of distributions
one has
\[
	((-\Dl - z)G_z^\Omg)(x,y) = \dl(x-y),
\]
where $\dl(\cdot)$ is the standard Dirac distribution in $\dR^d$, supported at the origin.

We will also need to refer to the Green function $\cS$ of the free Laplacian on $\dR^d$, defined in complete analogy to $G_z^\Omg$. In fact, $\cS$ is given explicitly by
\begin{equation}\label{eq:freeGreen}
	\cS(x,y) := F(|x-y|)
	\qquad\text{with}\qquad
	F(t) :=  
	\begin{cases}
		-\frac{\ln t}{2\pi},   &  d = 2,\\
		\frac{1}{4\pi t},     &  d = 3.
	\end{cases}
\end{equation}
Correspondingly, we define
\begin{equation}\label{eq:HzOmegaxy}
	 \cH^\Omg_z(x,y) := G_z^\Omg(x,y) - \cS(x,y)\,.
	\end{equation}

\begin{prop}\cite[Sect.~1]{Bandle-78}
\label{prop:Green1}
Let $\Omega\subset\dR^d$, $\lm_1(\Omg)$, and $G_z^\Omg$ be as above.
Assume that $z\in (-\infty,\lm_1(\Omg))$.
Then:
\begin{myenum}
	\item for fixed $y\in\Omg$, the function $x\mapsto \cH^\Omg_z(x,y)$ is continuous
	on $\ov{\Omg}$, whence also
\begin{equation}\label{eq:Green}
	G_z^\Omg(x,y) = \cS(x,y) + \cH^\Omg_z(y,y) + o(1)
	\qquad x\arr y\,;
\end{equation}	
	\item $G_z^\Omg(x,y) = 0$ for any $x\in\p\Omg$
	and all $y\in \Omg$;
	\item $G_z^\Omg$ is positive in $\Omg\tm\Omg$.
\end{myenum}
\end{prop}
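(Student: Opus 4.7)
The plan is to construct $\cH_z^\Omg(\cdot,y)$ directly as the unique solution of an auxiliary elliptic boundary value problem, verify that $\cS+\cH_z^\Omg$ coincides with the resolvent kernel of $\OpD$, and then settle (iii) by a separate positivity argument. Fix $y\in\Omg$. Since $-\Delta\cS(\cdot,y)=\dl(\cdot-y)$ on $\dR^d$ distributionally, demanding $(-\Delta-z)G_z^\Omg(\cdot,y)=\dl(\cdot-y)$ in $\Omg$ with Dirichlet boundary condition is formally equivalent to requiring that $h:=G_z^\Omg(\cdot,y)-\cS(\cdot,y)$ satisfy
\begin{equation*}
 (-\Delta-z)h \;=\; z\,\cS(\cdot,y) \quad\text{in }\Omg,\qquad h|_{\p\Omg} \;=\; -\cS(\cdot,y)|_{\p\Omg}.
\end{equation*}
I would take this problem as the \emph{definition} of $\cH_z^\Omg(\cdot,y)$ and recover $G_z^\Omg$ through $G_z^\Omg:=\cS+\cH_z^\Omg$, checking against the spectral representation of $\sfR_{\rm D}^\Omg(z)$ that the candidate is indeed the Green function.

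Items (i) and (ii) then follow from the regularity of $h$. Since $z<\lm_1(\Omg)$, the operator $\OpD-z$ is boundedly invertible; the boundary datum $-\cS(\cdot,y)|_{\p\Omg}$ is smooth because $y$ lies in the interior and $\cS$ is real-analytic off the diagonal; and the forcing $z\,\cS(\cdot,y)$ has only a mild singularity at $y$ (logarithmic for $d=2$, Newtonian for $d=3$), hence belongs to $L^p_{\mathrm{loc}}(\Omg)$ for some $p>d/2$. Standard $L^p$ elliptic regularity on the $C^\infty$ domain $\Omg$ therefore yields $h\in C(\ov{\Omg})$. Continuity at $y$ immediately produces the asymptotic expansion in (i) by letting $x\arr y$, while (ii) is obtained from $G_z^\Omg=\cS+h$ together with the chosen boundary datum.

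The main obstacle is (iii), the strict positivity of $G_z^\Omg$. My preferred route is through the Dirichlet heat semigroup $\{e^{-t\OpD}\}_{t\ge 0}$, whose integral kernel $K_t^\Omg(x,y)$ is classically strictly positive on $\Omg\tm\Omg$ for every $t>0$, either by the strong parabolic maximum principle or, probabilistically, by Feynman--Kac applied to Brownian motion killed at $\p\Omg$. Since $z<\lm_1(\Omg)$, the semigroup satisfies $\|e^{-t\OpD}\|_{L^2\arr L^2}\le e^{-\lm_1(\Omg)t}$, and taking the Laplace transform in $t$ yields, for $x\ne y$,
\begin{equation*}
 G_z^\Omg(x,y) \;=\; \int_0^\infty e^{zt}\,K_t^\Omg(x,y)\,\dd t,
\end{equation*}
the integrand being strictly positive and the integral converging thanks to Gaussian smallness of $K_t^\Omg(x,y)$ as $t\arr 0^+$ for $x\ne y$ and to the exponential factor $e^{(z-\lm_1(\Omg))t}$ at large $t$. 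Positivity of $G_z^\Omg$ on $\Omg\tm\Omg$ follows. A PDE-only alternative would invoke the maximum principle directly: outside a small ball $B_\varepsilon(y)\subset\Omg$, the function $G_z^\Omg(\cdot,y)$ is a classical solution of $(-\Delta-z)u=0$, positive on $\p B_\varepsilon(y)$ for $\varepsilon$ small (by the divergence of $\cS$) and zero on $\p\Omg$; the strict inequality $z<\lm_1(\Omg)$ then furnishes the first Dirichlet eigenfunction as a positive supersolution that makes the comparison principle applicable.
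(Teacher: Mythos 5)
The paper does not actually prove Proposition~\ref{prop:Green1}: it is stated with a citation to Bandle's 1978 article and used as a black box. Your argument is therefore a self-contained proof of a cited result, and as such it is essentially correct and closely tracks what Bandle does there, namely define $\cH^\Omg_z(\cdot,y)$ as the solution of the boundary value problem $(-\Delta-z)h = z\,\cS(\cdot,y)$ in $\Omg$, $h|_{\p\Omg}=-\cS(\cdot,y)|_{\p\Omg}$, and read off (i) and (ii) from elliptic regularity. Your positivity argument via the Laplace transform of the Dirichlet heat kernel is a different route from the pure PDE comparison argument one typically sees in this context, but it is equally valid and in some respects cleaner, since it avoids any discussion of the sign of $(-\Delta-z)$ on subdomains.

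Two small points are worth tightening. First, when you ``take the BVP as the definition'' and then ``check against the spectral representation,'' that check is the crux of identifying your candidate with the resolvent kernel of $\OpD$; it deserves at least one sentence (e.g., verify that for $f\in C^\infty_0(\Omg)$ the function $x\mapsto\int_\Omg(\cS(x,y)+h(x,y))f(y)\dd y$ lies in $\dom\OpD$ and that $(\OpD-z)$ applied to it returns $f$, so uniqueness of the resolvent forces the identification). Second, in the PDE-only alternative for (iii), invoking the first Dirichlet eigenfunction of $\Omg$ as a supersolution is the right idea, but you should make explicit that on $D:=\Omg\setminus\ov{B_\varepsilon(y)}$ one has $\lm_1(D)>\lm_1(\Omg)>z$ by domain monotonicity, so the weak and strong maximum principles for $-\Delta-z$ are in force on $D$; combined with $G_z^\Omg(\cdot,y)>0$ on $\p B_\varepsilon(y)$ (from $\cS\to+\infty$ and boundedness of $h$) and $G_z^\Omg(\cdot,y)=0$ on $\p\Omg$, strict positivity on $D$, hence on $\Omg\setminus\{y\}$, follows. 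With these two clarifications the proof is complete and matches the spirit of Bandle's approach.
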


In what follows,
\[
 \cB \equiv \cB_R := \{x\in\dR^d\,\big|\, |x| < R\}
\]
stands for the ball centred at $\bfo\in\dR^d$ and being such that $|\Omg| = |\cB|$. 
We shall also use the shorthand
\begin{equation}\label{eq:g}
   \begin{split}
    \hzy &:= \cH^\Omg_z(y,y) \\
    \gzy &:= G_z^\Omg(\cdot,y)\colon\ov\Omg\arr\dR^+
   \end{split}
\end{equation}
for any fixed $z < \lm_1(\Omg)$ and $y\in\Omg$.

Here are relevant properties of the Green function $G_z^\Omg$ with respect to the symmetric decreasing rearrangement.

\begin{prop}\label{prop:Green2}
$\Omg\subset\dR^d$, $d\in\{2,3\}$, be a bounded domain
with $C^\infty$-smooth boundary.
	Let $z < \lm_1(\Omg)$ and $y\in\Omg$.
	Then:
	\begin{itemize}
	 \item[(i)] $\gzy \in L^2(\Omg)$\,;
	 \item[(ii)] $0\le (g_{0,y}^\Omega)^* \le g_{0,\bfo}^\cB$\,;
	 \item[(iii)] $\hzy \le h_{z,\bfo}^\cB$\,;
	 \item[(iv)] $h_{0,y}^\Omg < 0$ if $d = 3$\,.
	\end{itemize}
\end{prop}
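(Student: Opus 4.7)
The plan is to address (i)--(iv) in order of increasing subtlety, with Bandle's rearrangement scheme~\cite{Bandle-78} as the backbone. Part (i) is immediate: decomposing $g_{z,y}^\Omg(x) = \cS(x,y) + \cH^\Omg_z(x,y)$, the regular piece is continuous on $\overline{\Omg}$ (hence in $L^\infty(\Omg) \subset L^2(\Omg)$) by Proposition~\ref{prop:Green1}(i), whereas the singular piece $F(|\cdot - y|)$ is locally square-integrable since $\int_0^R r^{-2} \cdot r^2\, dr < \infty$ in $d=3$ and $\int_0^R (\log r)^2\, r\, dr < \infty$ in $d=2$, and $\Omg$ is bounded.

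For (ii) I would apply the Bandle-Talenti scheme to $u := g_{0,y}^\Omg$ and $\mu(t) := |\{u > t\}|$. Since $u$ is harmonic in $\Omg \setminus \{y\}$, positive in $\Omg$, vanishing on $\partial\Omg$, and carries an $\cS$-type singularity at $y$, the divergence theorem on $\{u > t\} \setminus B_\varepsilon(y)$ followed by $\varepsilon \to 0^+$ yields the normalised flux identity
\[
\int_{\{u=t\}} |\nabla u|\, d\sigma \;=\; 1 \qquad \text{for all } t > 0.
\]
Cauchy-Schwarz combined with the co-area formula then gives $P(\{u>t\})^2 \le -\mu'(t)$, and the isoperimetric inequality yields $P(\{u>t\}) \ge c_d\, \mu(t)^{(d-1)/d}$, hence $c_d^2\, \mu(t)^{2(d-1)/d} \le -\mu'(t)$. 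For the radial $v := g_{0,\bfo}^\cB$ both inequalities saturate (since $|\nabla v|$ is constant on spheres and sublevel sets are balls), so $\mu_v$ satisfies the same ODE with equality. With common initial value $\mu(0^+) = \mu_v(0^+) = |\Omg|$ and common limit $0$ at infinity, a standard ODE comparison forces $\mu(t) \le \mu_v(t)$, which is exactly $(g_{0,y}^\Omg)^* \le g_{0,\bfo}^\cB$.

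For (iii) at $z = 0$ I would extract the constant term in the singular expansions. Near $y$, $g_{0,y}^\Omg(x) = F(|x-y|) + h_{0,y}^\Omg + o(1)$, and a direct inversion of the level-set identity $\omega_d\, r^*(t)^d = |\{g_{0,y}^\Omg > t\}|$ using this profile shows that the rearrangement inherits the same asymptotic shape at $\bfo$: explicitly,
\[
(g_{0,y}^\Omg)^*(x) \;=\; F(|x|) + h_{0,y}^\Omg + o(1) \qquad \text{as } x \to \bfo,
\]
while of course $g_{0,\bfo}^\cB(x) = F(|x|) + h_{0,\bfo}^\cB + o(1)$. Subtracting $F(|x|)$ from (ii) and sending $x \to \bfo$ delivers $h_{0,y}^\Omg \le h_{0,\bfo}^\cB$. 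For general $z < \lm_1(\Omg)$ I would either adapt the Bandle argument to $(-\Delta - z)u = \delta_y$ using the modified flux identity $\int_{\{u=t\}} |\nabla u|\, d\sigma = 1 - z \int_{\{u > t\}} u\, dx$, or alternatively integrate in $z$ via the identity $\partial_z h_{z,y}^\Omg = \|g_{z,y}^\Omg\|_{L^2(\Omg)}^2$ (read off from differentiating the resolvent kernel) combined with the $L^2$-conservation of the rearrangement granted by (an extended) (ii).

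Finally (iv) follows from the strong maximum principle: $\cH^\Omg_0(\cdot, y) = G_0^\Omg(\cdot, y) - \cS(\cdot, y)$ is harmonic in $\Omg$ (the two delta sources at $y$ cancel) and takes the strictly negative boundary value $-(4\pi |x-y|)^{-1}$ on $\partial\Omg$ when $d = 3$, so $\cH^\Omg_0(\cdot, y) < 0$ throughout $\Omg$; in particular $h_{0,y}^\Omg = \cH^\Omg_0(y,y) < 0$. The principal technical hurdle lies in (iii), namely the asymptotic matching that verifies the rearrangement preserves not merely the leading singular term $F(|x|)$ but also the precise constant $h_{0,y}^\Omg$; a secondary difficulty is the extension of the pointwise comparison (ii) beyond $z = 0$, where $-\Delta - z$ is not automatically sign-preserving and the Talenti scheme does not apply off-the-shelf.
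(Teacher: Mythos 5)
Your proof of (i) coincides with the paper's; your (iv) is a cleaner, self-contained alternative (the paper instead derives (iv) from (iii) plus the known fact $h_{0,\bfo}^{\cB}<0$, whereas your maximum-principle observation---$\cH^\Omg_0(\cdot,y)$ is harmonic, equals $-(4\pi|\cdot-y|)^{-1}<0$ on $\partial\Omg$ when $d=3$---is more elementary). For (ii) you essentially re-derive Bandle's Theorem 2.1 via the flux identity and the isoperimetric/ODE comparison; the paper simply cites it. Your $z=0$ derivation of (iii) via asymptotic matching of $(g_{0,y}^\Omg)^*$ near the origin is a genuinely different and appealing route; with some care on the $o(1)$ propagation through the level-set inversion it can be made rigorous, and it proves $h_{0,y}^\Omg\le h_{0,\bfo}^\cB$ without any reference to Bandle's Lemma 2.3.

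The genuine gap is (iii) for general $z<\lm_1(\Omg)$, and you yourself flag it. Neither of your two suggested repairs closes it. The flux-identity route acquires the nonlocal term (the correct sign is $\int_{\{u=t\}}|\nabla u|\,d\sigma = 1 + z\int_{\{u>t\}}u\,dx$, note the $+$), so the inequality $c_d^2\mu(t)^{2(d-1)/d}\le-\mu'(t)$ becomes an integro-differential relation in which $u$ itself, not only its distribution function, appears; the clean ODE comparison no longer applies off the shelf and one has to rework the whole symmetrisation argument in the Helmholtz setting. The integration-in-$z$ route is circular or incomplete: to conclude from $\partial_z(h_{z,\bfo}^\cB - h_{z,y}^\Omg)=\|g_{z,\bfo}^\cB\|^2-\|g_{z,y}^\Omg\|^2$ you need $\|g_{z,y}^\Omg\|\le\|g_{z,\bfo}^\cB\|$ for all relevant $z$, which is precisely the general-$z$ rearrangement bound you are trying to avoid; and even granting that, for $z<0$ the monotonicity of $z\mapsto h_{z,\bfo}^\cB-h_{z,y}^\Omg$ together with nonnegativity at $z=0$ only yields $h_{z,\bfo}^\cB-h_{z,y}^\Omg\le h_{0,\bfo}^\cB-h_{0,y}^\Omg$, the wrong direction; you would additionally have to show that the difference tends to $0$ as $z\to-\infty$. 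The paper sidesteps all of this by invoking Bandle's Lemma 2.3 directly: for any $z$ there is a ball $\cB_{R'}$, $R'\le R$, with $h_{z,\bfo}^{\cB_{R'}}=h_{z,y}^\Omg$, and $r\mapsto h_{z,\bfo}^{\cB_r}$ is increasing; together these give $h_{z,y}^\Omg\le h_{z,\bfo}^\cB$ with no pointwise rearrangement comparison of Green functions at $z\ne 0$. That lemma, or an equivalent substitute, is the missing ingredient in your proposal.
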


\begin{proof}
	(i) Square-integrability of $\gzy$ is a consequence
	of Proposition~\ref{prop:Green1}\,(i) and of
	the fact that the function $\Omg\ni x\mapsto\cS(x,y)$ with fixed $y\in\Omg$ is square-integrable.
	
	(ii) follows directly from~\cite[Theorem 2.1 with $p=0$]{Bandle-78}. 
	
	(iii) Recall that $R >0$ is the radius of the ball $\cB$. By~\cite[Lemma 2.3]{Bandle-78} there is $R' \le R$ such that $h^{\cB_{R'}}_{z,\bfo} = h^\Omega_{z,y}$. According to~\cite[proof of Lemma 2.3]{Bandle-78} the function $r\mapsto h^{\cB_r}_{z,\bfo}$ is increasing. Hence, we conclude that 
	$h^{\cB}_{z,\bfo} \ge h^{\cB_{R'}}_{z,\bfo} = h^\Omega_{z,y}$. 
	
	(iv) follows  from (iii) (with $z = 0$) and from $h_{0,y}^\cB < 0$ (see~\cite[\S II.2.2]{Bandle-book-1980}).
\end{proof}

Denoting by $F^{-1}$ the inverse of the function $F$ in~\eqref{eq:freeGreen}, we set 
\begin{equation}\label{eq:conformal_radius}
	R^\Omg_y := F^{-1}(-h_{0,y}^\Omg) > 0\,.
\end{equation}
The quantity $R^\Omg_y$
is actually the \emph{conformal radius} of $\Omg$ at the point $y$ when $d = 2$, or the \emph{harmonic
radius} of $\Omg$ at $y$ when $d = 3$. (We refer to \cite{Bandle-Flucher-1996} and the references therein for a detailed discussion on conformal and harmonic radii.) In the case $z = 0$, Proposition~\ref{prop:Green2}\,(iii)
reduces to the inequality $R_y^\Omg \le R_\bfo^\cB$
for all $y\in\Omg$.

\section{Dirichlet Laplacian on bounded domain with point interaction}\label{sec:DirichletLaplacian}

In this Section we review the construction of the Dirichlet Laplacian
on a bounded domain with a point interaction and we collect an amount of relevant properties. This combines two complementary languages: the operator theoretic self-adjoint extension scheme and the quadratic form approach.

Let in the following $\Omg$ be a bounded domain in $\dR^d$ with $C^\infty$-boundary, $d\in\{2,3\}$, and let the point $y\in\Omg$ be fixed.

It is standard to see (see, e.g., \cite[Lemma 1]{Blanchard-figari-Mantile-2007}) that 
\begin{equation}\label{eq:S}
\begin{split}
 	\dom \sfS &:= \big\{ u\in H^2(\Omg) \cap H^1_0(\Omg) \,\big|\, u(y) = 0 \big\} \\
 	\sfS u &:= - \Delta u
\end{split}
\end{equation}
is a densely defined, closed, symmetric operator on $L^2(\Omg)$, with lower bound $\lm_1(\Omg)$ (the strictly positive, lowest eigenvalue of the Dirichlet Laplacian $\OpD$ from \eqref{eq:DLapl}), with deficiency indices $(1,1)$ and with deficiency subspace
\begin{equation}
 \ker (\sfS^*-z) \;=\; \spn\,\{\gzy\}\,,\qquad z < \lm_1(\Omg)\,,
\end{equation}
where $\gzy$ is the Green function \eqref{eq:g}. It is also standard to see that $\OpD$, obviously a self-adjoint extension of $\sfS$, is precisely the Friedrichs extension of $\sfS$ (the operator domain of the former is contained in the form domain of the latter, namely in the $H^1$-closure of $\dom \sfS$).

As a consequence of these facts and of the Vi\v{s}ik-Birman decomposition formula (see, e.g., \cite[Theorem 1]{GMO-KVB2017}), any $u\in\dom \sfS^*$ decomposes as
\begin{equation}\label{eq:u_in_adjdom}
	u = u_0 + c_1 \RzD \gzy + c_0 \gzy,
		\qquad
	c_0,c_1\in\dC,\,~u_0 \in \dom\sfS,									
\end{equation}
for any $z < \lm_1(\Omg)$, where $\RzD$ is the resolvent~\eqref{eq:resolvent} of $\OpD$. At each fixed parameter $z$, the decomposition \eqref{eq:u_in_adjdom} is unique in terms of the $u$- and $z$-dependent elements $c_0,c_1,u_0$, and
\begin{equation}\label{eq:action}
	\sfS^* u 
	= 
	\sfS u_0 + c_1 z\RzD \gzy + (c_1 + c_0z) \gzy.
\end{equation}

The self-adjoint extensions of $\sfS$ form a one-real-parameter family $\{\sfS_\beta\,|\beta\in\mathbb{R}\}\cup\{\OpD\}$, where
\begin{equation}\label{eq:Op}
\begin{split}
 \dom\sfS_\bb \;&:=\; 
	\Big\{
	 u = u_0 + c\big(\bb\RzD \gzy + \gzy\big)\,\big|\,
	 u_0 \in\dom\sfS, c\in\dC\Big\} \\
	 \sfS_\bb u  \;&:=\; \sfS^*u\,.
\end{split}
\end{equation}
Formula \eqref{eq:Op} is a direct application to the present unit-deficiency-index case of the general classification formula for the self-adjoint extensions of a lower semi-bounded (and densely defined) symmetric operator (see, e.g., \cite[Theorem 5]{GMO-KVB2017} or \cite[Sect.~14.8]{schmu_unbdd_sa}).
The Friedrichs extension $\OpD$ formally corresponds to $\beta=\infty$.

The extensions $\sfS_\beta$ are equivalently characterised in terms of their quadratic forms: the form $\frs_\beta$ associated with each $\sfS_\beta$ is given by
\begin{equation}\label{eq:form_s}
\begin{split}
	\dom\frs_\bb \;&=\;\big\{u = v + \xi \gzy\,\big|\, v \in H^1_0(\Omg), \xi\in\dC \big\} \\
	\frs_\bb[u] \;&=\;  \|\nb v\|^2_{L^2(\Omg)}
	+ \bb\|\gzy\|^2_{L^2(\Omg)}\cdot|\xi|^2 + 
	z\big(\|u\|^2_{L^2(\Omg)}-\|v\|^2_{L^2(\Omg)}\big)
\end{split}
\end{equation}
(see, e.g., \cite[Theorem 7]{GMO-KVB2017} or \cite[Theorem 14.24]{schmu_unbdd_sa}).

The combination of the asymptotics \eqref{eq:Green} for $\gzy$ with the decomposition \eqref{eq:Op} implies that 
any function $u \in \dom \sfS_\bb$ behaves in the vicinity of the point $y$ as 
\begin{equation}\label{eq:expansion1}
	u(x) = 
	c\Big(
		\bb\|\gzy\|^2_{L^2(\Omg)}+ \cS(x,y) + 	\hzy 
	\Big) + o(1),	
	\qquad x\arr y. 
\end{equation}
This short-scale behaviour, as argued already in \cite[Sect.~III]{Blanchard-figari-Mantile-2007}, has the form 
\begin{equation}\label{eq:expansion2}
	u(x) = c\left(\cS(x,y) + \aa + o(1)\right),\qquad x\arr y
\end{equation}
which is typical of the low-energy scattering of a quantum particle over a scattering centre with zero-range interaction and with $s$-wave scattering length $(-\alpha)^{-1}$, in suitable units, as originally identified by  Bethe and Peierls \cite{Bethe_Peierls-1935} (whence also the nomenclature of \emph{Bethe-Peierls contact condition} -- see, e.g., \cite[Sect.~2]{MO-2016}). It is therefore meaningful to re-parametrise the $\sfS_\bb$'s  in terms of the new, physical grounded extension parameter
\begin{equation}\label{eq:alphabeta}
	\aa	= \bb\|\gzy\|^2_{L^2(\Omg)} +\hzy.
\end{equation}

Upon plugging \eqref{eq:alphabeta} into \eqref{eq:Op} and \eqref{eq:form_s}, we can summarise the above considerations as follows.

\begin{prop}\label{prop:OP} 
Let $\Omg$ be a bounded domain in $\dR^d$ with $C^\infty$-boundary, $d\in\{2,3\}$, and let $y\in\Omg$. Correspondingly, let $\sfS$ be as in \eqref{eq:S}.
\begin{itemize}
 \item[(i)] The self-adjoint extensions of $\sfS$ in $L^2(\Omega)$ constitute the one-real-parameter family
 \[
  \big\{\Op\,\big|\,\alpha\in\mathbb{R}\big\}\cup\{\OpD\}
 \]
 with each element $\Op$ given, fixed $z < \lm_1(\Omg)$, by
 \begin{equation}\label{eq:HalphayD}
  \begin{split}
   \dom\Op\;&=\;
   \left\{
   \begin{array}{c}
    u = u_0 + 
		c\|\gzy\|^{-2}_{L^2(\Omg)}
		\big(\aa - \hzy\big)\RzD\gzy + c\gzy \\
    \textrm{for some }\;u_0 \in \dom\sfS,
		c\in\dC
   \end{array}
   \right\} \\
   \Op u\;&=\;-\Dl u_0 + 
		c\|\gzy\|^{-2}_{L^2(\Omg)}\big(\aa - \hzy\big)
		\big[z\RzD\gzy + \gzy\big] +c z\gzy\,.
  \end{split}
 \end{equation}
 \item[(ii)] The quadratic form $\frh_{\aa,y}^\Omg$ of $\Op$ is given, fixed $z < \lm_1(\Omg)$, by
 \begin{equation}\label{eq:form}
	\begin{split}
		\dom \frh_{\aa,y}^\Omg  \;&=\; \big\{u = v + \xi 	g_{z,y}^\Omg\,|\, v \in H^1_0(\Omg), \xi\in\dC \big\} \\
		\frh_{\aa,y}^\Omg[u]  \;&=\; \|\nb v\|^2_{L^2(\Omg)} + 
		\Big(\aa - \hzy\Big)	|\xi|^2 + z\big(\|u\|^2_{L^2(\Omg)} - \|v\|^2_{L^2(\Omg)}\big)\,.
	\end{split}
	\end{equation}
\end{itemize}
 Fixed the parameter $z$, the decompositions in \eqref{eq:HalphayD} and \eqref{eq:form} of $u$ are unique. $\OpD$ is the Friedrichs extension.
\end{prop}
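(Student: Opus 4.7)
The plan is straightforward: Proposition~\ref{prop:OP} recasts the classification of self-adjoint extensions already established in~\eqref{eq:Op} and~\eqref{eq:form_s} by replacing the abstract Vi\v{s}ik--Birman parameter $\bb$ with the physically motivated parameter $\aa$ introduced in~\eqref{eq:alphabeta}. The proof is therefore essentially a substitution, once bijectivity of this reparametrization has been checked.

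For part~(i), the first step is to note that at each fixed $z<\lm_1(\Omg)$ one has $\gzy\in L^2(\Omg)$ with $\gzy\not\equiv 0$, by Proposition~\ref{prop:Green2}(i) together with the positivity statement in Proposition~\ref{prop:Green1}(iii); consequently the affine map
\[
 \bb\;\longmapsto\;\aa\;=\;\bb\|\gzy\|^2_{L^2(\Omg)} + \hzy
\]
is a bijection $\dR\to\dR$. Declaring $\Op:=\sfS_\bb$ for the unique preimage $\bb=\|\gzy\|^{-2}_{L^2(\Omg)}(\aa-\hzy)$ thus enumerates all non-Friedrichs self-adjoint extensions of $\sfS$ without repetition. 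Feeding this expression for $\bb$ into the domain formula of~\eqref{eq:Op} gives directly the displayed $\dom\Op$ in~\eqref{eq:HalphayD}; the explicit action is then obtained by applying~\eqref{eq:action} with $c_1=c\bb$ and $c_0=c$ and performing the same substitution.

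For part~(ii), the identical replacement inside the $|\xi|^2$-coefficient $\bb\|\gzy\|^2_{L^2(\Omg)}$ of~\eqref{eq:form_s} converts it into $\aa-\hzy$ and turns the form expression into~\eqref{eq:form}; the form domain is untouched, being independent of $\bb$. Uniqueness of the decompositions at fixed $z$ is inherited from the Vi\v{s}ik--Birman formula, and the identification of $\OpD$ as the Friedrichs extension was already recorded in the paragraph preceding~\eqref{eq:u_in_adjdom}. The only conceptual point requiring attention -- and the main (if modest) subtlety of the argument -- is the verification that, although each of $\|\gzy\|^2_{L^2(\Omg)}$, $\hzy$ and $\RzD\gzy$ depends on $z$, the right-hand sides of~\eqref{eq:HalphayD} and~\eqref{eq:form} define a single $z$-independent operator $\Op$ and form $\frh_{\aa,y}^\Omg$. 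This $z$-independence is precisely the rationale for preferring $\aa$ over $\bb$: $\aa$ encodes the intrinsic Bethe--Peierls short-scale constant appearing in~\eqref{eq:expansion2}, which is manifestly a property of the function and of the extension alone.
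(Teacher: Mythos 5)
Your proof is correct and follows exactly the route the paper takes: the paper states Proposition~\ref{prop:OP} as the result of plugging the reparametrisation~\eqref{eq:alphabeta} into the Vi\v{s}ik--Birman formulas~\eqref{eq:Op} and~\eqref{eq:form_s}, and your substitution with $c_1=c\bb$, $c_0=c$ in~\eqref{eq:action}, together with the bijectivity of $\bb\mapsto\aa$ from $\|\gzy\|_{L^2(\Omg)}>0$, reproduces precisely that computation. Your closing remark on the $z$-independence of $\Op$ is a worthwhile observation that the paper leaves implicit.
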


We shall refer to each $\Op$ as a \emph{self-adjoint Dirichlet Laplacian with point interaction on $\Omega$ with interaction centre $y\in\Omega$ and interaction strength $(-\aa)^{-1}$}.

It is convenient to introduce further shorthand notation:
\begin{equation}\label{eq:brevity}
 \begin{split}
  \gy \;&=\; g_{0,y}^\Omg \\
  \hy \;&=\; h_{0,y}^\Omg \\
  \frm \;&=\; \frh_{\aa,0,y}^\Omg \\
  \ggy \;&=\; \|\gy\|^{-2}_{L^2(\Omg)}\,.
 \end{split}
\end{equation}
This allows one to re-write \eqref{eq:HalphayD} and \eqref{eq:form}, with the choice $z=0$, respectively as
\begin{equation}\label{eq:HalphayD-2}
\begin{split}
	\dom\Op \;&=\;\left\{
   \begin{array}{c}
   u = u_0 + 
	c\ggy\big(\aa - \hy\big)
	\sfR_{\rm D}^\Omg(0)\gy + c\gy \\
    \textrm{for some }\;u_0 \in \dom\sfS,
		c\in\dC
   \end{array}
   \right\} \\
	\Op u \;&=\; -\Dl u_0 + 
	c\ggy\big(\aa - \hy\big)
	\gy
\end{split}
\end{equation}
and
\begin{equation}\label{eq:h}
\begin{split}
\dom \frm  \;&=\; \cD_y^\Omg \;:=\; \big\{u = v + \xi 	g_{y}^\Omg\,|\, v \in H^1_0(\Omg), \xi\in\dC \big\} \\
	\frm[u]  
	\;&=\; \|\nb v\|^2_{L^2(\Omg)} + 
	\big(\aa -  \hy\big)|\xi|^2\,.
\end{split}
\end{equation}

Let us work out certain useful properties of the Hamiltonian $\Op$.

\begin{prop}\label{prop:basic}~

	\begin{myenum}
	\item $\Op$ is lower semi-bounded and has compact resolvent. 
	\item 
	$\Op \ge 0 $ if and only if $\aa \ge \hy$.
	\item The map $\aa\mapsto\Op$
	is a non-decreasing operator-valued function 
	in the sense of ordering of forms.
	\item The map $\aa\mapsto \Op$ is continuous in the strong resolvent sense.
	\item $\Op\to\OpD$ as $\aa \arr +\infty$ in the strong resolvent sense.
	\end{myenum}
\end{prop}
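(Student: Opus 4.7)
Each of the five claims of Proposition~\ref{prop:basic} should follow quite directly from the operator- and form-theoretic structure already assembled in Proposition~\ref{prop:OP}. The plan is to deduce (i) from general extension theory (with a hands-on alternative via the form \eqref{eq:form}), to read (ii) and (iii) off the explicit form \eqref{eq:h} at $z=0$, and to deduce (iv) and (v) from a Krein-type resolvent formula.

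For (i), lower semi-boundedness comes essentially for free: $\sfS$ is closed, symmetric, lower semi-bounded (with bound $\lm_1(\Omg)$), and has deficiency indices $(1,1)$, so by Krein--Vi\v{s}ik--Birman theory every self-adjoint extension of $\sfS$ is lower semi-bounded (see \cite[Thm.~5]{GMO-KVB2017} or \cite[Sect.~14.8]{schmu_unbdd_sa}). A more concrete alternative is to use \eqref{eq:form} at $z < 0$ so negative that $\aa - \hzy > 0$ (which is possible because $\hzy \to -\infty$ as $z\to-\infty$, by direct inspection of the free Green function and Proposition~\ref{prop:Green1}), from which the identity
\[
\frh_{\aa,y}^\Omg[u] - z\|u\|^2_{L^2(\Omg)} \;=\; \|\nb v\|^2_{L^2(\Omg)} + (\aa-\hzy)|\xi|^2 \;-\; z\|v\|^2_{L^2(\Omg)}\;\ge\;0
\]
yields $\Op \ge z$. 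For compactness of the resolvent, the Vi\v{s}ik--Birman parametrisation \eqref{eq:HalphayD} makes $(\Op - z)^{-1} - \sfR_{\rm D}^\Omg(z)$ a rank-one operator; since $\sfR_{\rm D}^\Omg(z)$ is compact on the bounded smooth domain $\Omg$, so is $(\Op - z)^{-1}$.

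Claims (ii) and (iii) I would read off directly from \eqref{eq:h}. The form domain $\cD_y^\Omg$ is independent of $\aa$, and on this common domain
\[
\frm[u] \;=\; \|\nb v\|^2_{L^2(\Omg)} + (\aa-\hy)|\xi|^2.
\]
Non-negativity on $\cD_y^\Omg$ is therefore equivalent to $\aa\ge \hy$: sufficiency by summing two non-negative terms, necessity by testing on $u=\gy$ (i.e. $v=0$, $\xi=1$), which gives $\frm[\gy] = \aa-\hy$. This proves (ii). For (iii), if $\aa_1\le\aa_2$ then $\frh_{\aa_1,y}^\Omg[u]\le \frh_{\aa_2,y}^\Omg[u]$ pointwise on $\cD_y^\Omg$, which is precisely the ordering of the associated self-adjoint operators in the form sense.

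For (iv) and (v) I would make explicit the Krein-type resolvent formula attached to \eqref{eq:HalphayD}: at any $z<\lm_1(\Omg)$ with $\aa\ne \hzy$, a direct computation (inserting the decomposition of $\dom\Op$ into the eigenvalue equation $(\Op-z)u = f$ and using the identification \eqref{eq:alphabeta}) produces
\[
(\Op - z)^{-1} \;=\; \sfR_{\rm D}^\Omg(z) + \frac{1}{\aa-\hzy}\,\big(g_{\bar z,y}^\Omg,\,\cdot\,\big)_{L^2(\Omg)}\,\gzy.
\]
For such fixed $z$ the rank-one correction depends continuously on $\aa$, so (iv) in fact holds in the norm-resolvent sense; letting $\aa\to+\infty$ collapses that correction to zero in norm, giving (v). The main technical point is thus the derivation of this explicit Krein formula from the abstract decomposition \eqref{eq:HalphayD}, but this is a routine computation in extension theory; once it is established the whole proposition follows with minimal additional effort.
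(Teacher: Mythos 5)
Your proof is correct, and for parts (ii) and (iii) it coincides with the paper's argument (read off \eqref{eq:h}, test on $\gy$, use pointwise ordering of forms on the common domain $\cD_y^\Omg$). The interesting divergences are in (i), (iv), and (v).

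For (i), the paper does not invoke the Krein--Vi\v{s}ik--Birman semi-boundedness theorem nor the explicit form estimate; it observes instead that $\Op$ and $\OpD$ have resolvent difference of rank one (same argument as yours for compactness), and deduces \emph{both} semi-boundedness and resolvent compactness from the corresponding properties of $\OpD$. Your two alternatives are equally valid; the form-based one does require the fact that $\hzy\to-\infty$ as $z\to-\infty$, which you correctly flag and which follows from the short-distance expansion of $G_z^\Omg$.

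For (iv) and (v) the routes genuinely diverge. The paper works entirely on the form side: continuity of $\aa\mapsto\frm[u]$ for each fixed $u$, combined with Kato's form-convergence theorem, gives strong resolvent continuity; and monotonicity (part (iii)) together with the monotone convergence theorem for quadratic forms gives the $\aa\to+\infty$ limit. Your approach instead produces an explicit Krein resolvent formula
\[
(\Op - z)^{-1} = \sfR_{\rm D}^\Omg(z) + \frac{1}{\aa-\hzy}\,\big(g_{\bar z,y}^\Omg,\,\cdot\,\big)_{L^2(\Omg)}\,\gzy,
\]
which reduces (iv) and (v) to elementary scalar computations and in fact yields the stronger \emph{norm}-resolvent continuity and convergence. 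The trade-off is that one must derive the formula from the Vi\v{s}ik--Birman decomposition \eqref{eq:HalphayD} (a routine but not free computation, and one must fix $z$ in a common resolvent set for all $\aa$ considered, e.g.\ $z$ non-real for (v)), whereas the paper's abstract route avoids any formula. Both arguments are sound; yours buys a stronger mode of convergence, the paper's buys brevity and stays closer to the form-theoretic framework it has set up.
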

\begin{proof}
	(i) 
	Since $\Op$ and $\OpD$
	are both self-adjoint extensions of the symmetric operator $\sfS$, and $\sfS$ has unit deficiency indices, the difference of their resolvents is a rank-one operator.
	Hence, semi-boundedness of $\Op$ and compactness of its resolvent follow
	from respective properties of $\OpD$. 

	(ii) When $\aa \ge \hy$, non-negativity of $\Op$ follows
	from the fact that $\frm[u] \ge 0$ for all $u\in\dom\frm$, which can be seen in \eqref{eq:h}.	
	Conversely, when $\aa < \hy$, \eqref{eq:h} yields
	\[
		\frm[\gy] =	\aa-\hy < 0.
	\] 
	Hence, the min-max principle implies that the negative spectrum of $\Op$ is non-empty.
	
	(iii)
	The claimed property follows from the fact that $\mathrm{dom}\,\frm$ is independent of $\aa$ and that
	for any $u \in\cD_y^\Omg$ one has $\frh_{\aa_1,y}^\Omg[u] \le \frh_{\aa_2,y}^\Omg[u]$ whenever $
	\aa_1 \le \aa_2$.
	
	(iv)
	Continuity of the operator-valued
	function $\aa\mapsto \Op$ in the strong resolvent sense is a consequence of continuity of the scalar-valued function 
	$\aa\mapsto\frm [u]$ for any $u\in \cD_y^\Omg$ combined with~\cite[Theorem XIII.3.6]{Kato-perturbation}.
	
	(v) The claim (iii), combined with the monotone convergence theorem for quadratic forms~\cite[Theorem S.14]{rs1}  imply $\Op\arr\OpD$ as $\aa\arr+\infty$ in the strong resolvent sense.
\end{proof}

\section{The lowest eigenvalue and the ground state of $\Op$}\label{sec:lowesteigenvalue}

In this Section we discuss the properties of the lowest eigenvalue $\Ev$ of the Hamiltonian $\Op$, and of the corresponding eigenfunction.

For our later purposes, crucial features to analyse are the dependence of such objects (and associated quantities) on the extension parameter $\aa$, as well as the convenient representations of the ground state eigenfunction.

We start with deriving a first set of results in this spirit.

\begin{prop}\label{prop:EVEF}
Let $\Omg$ be a bounded domain in $\dR^d$ with $C^\infty$-boundary, $d\in\{2,3\}$, and let $y\in\Omg$, $\alpha\in\mathbb{R}$. Correspondingly, let $\Op$ be as in Proposition \ref{prop:OP}, and let $\Ev$ be its lowest eigenvalue. (As a reference, let us recall that $\lm_1(\Omega) > 0$ is the lowest eigenvalue of the Dirichlet Laplacian $\OpD$ from \eqref{eq:DLapl}). One has:
	\begin{itemize}
	\item[(i)] $\Ev < \lm_1(\Omg)$;
	\item[(ii)] the spectrum of $\Op$ in the interval $(-\infty,\lm_1(\Omg))$ consists of a unique simple eigenvalue;		%
	\item[(iii)] the function $\dR\ni\aa\mapsto\Ev$ is continuous;		%
	\item[(iv)] $\displaystyle\lim_{\aa\arr-\infty} \Ev = -\infty$
	\;and\;
	$\displaystyle\lim_{\aa\arr +\infty} \Ev = \lm_1(\Omg)$.
	\end{itemize}
	Let now $\Ef$ be the eigenfunction of $\Op$ corresponding to $\Ev$ (up to a multiplicative constant). Then:
	\begin{itemize}
	\item[(v)] one has the representation
	\[
	 \Ef \;=\; c \gzy\qquad\textrm{ with }\;z \,=\, \Ev\;\textrm{ and }\;c \in\dC\sm\{ 0\}\,,
	\]
	hence $u_1^\aa$ can be chosen to be positive on $\Omg$;	
	\item[(vi)] when in particular $\Ev > 0$, one has the representation
	\[
	 u_1^\aa \;=\; v + \xi \gy\qquad\textrm{ with }v\in H^1_0(\Omg)\,,\;\textrm{$v \ge 0$ on $\Omg$}\,,\;\textrm{ and }\;\xi \ge 0\,.
	\]
	\end{itemize}
\end{prop}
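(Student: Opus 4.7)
The plan is to treat parts (i)--(iv) via min--max and resolvent-perturbation tools already made available by Propositions~\ref{prop:OP} and~\ref{prop:basic}, and to treat (v)--(vi) by unfolding the Vi\v{s}ik--Birman decomposition of $\dom\sfS^*$ recorded in \eqref{eq:u_in_adjdom} and \eqref{eq:HalphayD}.

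For (i), since $\dom\frm$ strictly contains $H^1_0(\Omg)$ and since on $H^1_0(\Omg)$ (i.e.\ $\xi=0$ in \eqref{eq:h}) the form $\frm$ coincides with the Dirichlet form, min--max gives $\Ev\le\lm_1(\Omg)$. For strict inequality I will test with $u_\epsilon := \phi_1+\epsilon\,\gy$, where $\phi_1$ is the positive $L^2$-normalised first Dirichlet eigenfunction: using $(\phi_1,\gy)_{L^2(\Omg)}=(\sfR_{\rm D}^\Omg(0)\phi_1)(y)=\lm_1(\Omg)^{-1}\phi_1(y)>0$, a first-order expansion in $\epsilon$ of the Rayleigh quotient built on $\frm$ yields $\lm_1(\Omg)-2\epsilon\,\phi_1(y)+O(\epsilon^2)$, strictly below $\lm_1(\Omg)$ for sufficiently small $\epsilon>0$. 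Part (ii) will follow from the interlacing of min--max values between $\OpD$ and $\Op$: since $\dom\frm$ is a one-dimensional extension of $H^1_0(\Omg)$, one has $\mu_k(\Op)\le\lm_k(\OpD)\le\mu_{k+1}(\Op)$, so at most one eigenvalue of $\Op$ can lie below $\lm_1(\Omg)$, and by (i) it does; simplicity is automatic here and is also reconfirmed by (v). For (iii), continuity of $\Ev$ in $\aa$ is a consequence of monotonicity (Proposition~\ref{prop:basic}(iii)), strong resolvent continuity (Proposition~\ref{prop:basic}(iv)), and the isolation of $\Ev$ from (ii). For (iv), the trial vector $\gy$ in $\dom\frm$ yields Rayleigh quotient $(\aa-\hy)\ggy$, which diverges to $-\infty$ as $\aa\to-\infty$, forcing $\Ev\to-\infty$; in the opposite limit Proposition~\ref{prop:basic}(v) gives $\Op\to\OpD$ in strong resolvent sense as $\aa\to+\infty$, and combined with monotonicity and isolation of $\Ev$ this forces $\Ev\to\lm_1(\Omg)$.

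For (v), I will fix any eigenfunction $\Ef$ with eigenvalue $z:=\Ev<\lm_1(\Omg)$ and decompose it via \eqref{eq:HalphayD} at that same spectral parameter, writing $\Ef = u_0 + cK\,\sfR_{\rm D}^\Omg(z)\gzy + c\,\gzy$ with $K:=\|\gzy\|^{-2}_{L^2(\Omg)}(\aa-\hzy)$, $u_0\in\dom\sfS$, $c\in\dC$. Imposing $\Op\Ef=z\Ef$ via the action in \eqref{eq:HalphayD} and cancelling the $z$-multiplied terms reduces the eigenvalue equation to $(\OpD-z)u_0=-cK\,\gzy$, hence $u_0=-cK\,\sfR_{\rm D}^\Omg(z)\gzy$. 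The crucial step is to evaluate at $x=y$: using $(\sfR_{\rm D}^\Omg(z)\gzy)(y)=\|\gzy\|^2_{L^2(\Omg)}$ together with the constraint $u_0(y)=0$ inherent in $u_0\in\dom\sfS$, one concludes that either $c=0$ or $K=0$. The alternative $c=0$ would make $\Ef=u_0$ a Dirichlet eigenfunction at $z<\lm_1(\Omg)$, which is impossible; hence $K=0$, i.e.\ $\aa=h_{\Ev,y}^\Omega$, $u_0=0$, and $\Ef = c\,g_{\Ev,y}^\Omega$. Choosing $c>0$, positivity of $\Ef$ on $\Omg$ is immediate from Proposition~\ref{prop:Green1}(iii) applied at $z=\Ev$.

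For (vi) with $\Ev>0$, I will apply the first resolvent identity in kernel form, $g_{\Ev,y}^\Omega = \gy + \Ev\,\sfR_{\rm D}^\Omg(0)g_{\Ev,y}^\Omega$, to the representation $\Ef=c\,g_{\Ev,y}^\Omega$ from (v), thus obtaining $\Ef = v + \xi\,\gy$ with $\xi:=c>0$ and $v:=c\Ev\,\sfR_{\rm D}^\Omg(0)g_{\Ev,y}^\Omega\in H^2(\Omg)\cap H^1_0(\Omg)\subset H^1_0(\Omg)$ (square-integrability of $g_{\Ev,y}^\Omega$ being Proposition~\ref{prop:Green2}(i)). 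Positivity $v\ge 0$ on $\Omg$ then follows from $c\Ev>0$ combined with the strict positivity of $g_{\Ev,y}^\Omega$ on $\Omg$ and of the integral kernel $G_0^\Omg$ of $\sfR_{\rm D}^\Omg(0)$ on $\Omg\times\Omg$ (both instances of Proposition~\ref{prop:Green1}(iii)). The main obstacle of the whole argument lies in (v): one has to handle with care the $z$-parameter in the Vi\v{s}ik--Birman decomposition and exploit the vanishing-at-$y$ constraint intrinsic to $\dom\sfS$ in order to extract the resonance-type condition $\aa=h_{\Ev,y}^\Omega$, which simultaneously identifies both the eigenvalue and its eigenfunction.
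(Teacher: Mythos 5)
Your proposal is correct overall and shares the paper's strategy in parts (i), (iii), (iv), and (v), but it takes genuinely different and noteworthy routes in (ii) and (vi).

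For (ii), you invoke the min--max interlacing $\mu_k(\Op)\le\lm_k(\OpD)\le\mu_{k+1}(\Op)$, which follows from the fact that $H^1_0(\Omg)$ has codimension one in $\dom\frm$; this buys both uniqueness and simplicity in one stroke. The paper instead argues via the rank-one difference of resolvents between $\Op$ and $\OpD$ and then cites the Birman--Solomjak spectral-shift result. Both are valid; yours is arguably more elementary and self-contained, whereas the paper's is the operator-theoretic counterpart that cleanly matches the $\sfS^*$-extension framework used elsewhere.

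For (v), the paper works with the \emph{adjoint} decomposition $\Ef=u_0+c_1\sfR_{\rm D}^\Omg(\lm_1)g_{\lm_1,y}^\Omg+c_0 g_{\lm_1,y}^\Omg$ and deduces $u_0+c_1\sfR_{\rm D}^\Omg(\lm_1)g_{\lm_1,y}^\Omg=0$ immediately from injectivity of $\OpD-\lm_1$, so no pointwise evaluation is required. You work with the constrained $\dom\Op$-decomposition, where the two singular coefficients are tied together by $c_1=Kc_0$, and you therefore need the additional evaluation at $y$ (using $(\sfR_{\rm D}^\Omg(z)\gzy)(y)=\|\gzy\|^2_{L^2(\Omg)}$ and $u_0(y)=0$) to close the consistency loop and recover $\Ef = c\,g_{\Ev,y}^\Omg$. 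Both are fine; your route has the modest virtue of delivering, essentially for free, the spectral condition $\aa=h_{\Ev,y}^\Omg$, which the paper records separately as Proposition~\ref{prop:BS}\,(i).

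For (vi), your argument is genuinely different from the paper's and cleaner. The paper exploits a variational ``truncation'' trick: it compares the Rayleigh quotients of $\Ef=v+\xi\gy$ and $|v|+\xi\gy$ and uses simplicity of the ground state to force $v=|v|$. You instead apply the first resolvent identity $g_{\Ev,y}^\Omg = \gy + \Ev\,\sfR_{\rm D}^\Omg(0)g_{\Ev,y}^\Omg$ to the representation from (v), which directly produces $v=c\,\Ev\,\sfR_{\rm D}^\Omg(0)g_{\Ev,y}^\Omg$; positivity of $v$ then drops out from the positivity of the Green kernel $G_0^\Omg$ and of $g_{\Ev,y}^\Omg$ (Proposition~\ref{prop:Green1}\,(iii)) together with $c\Ev>0$. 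This is more direct and even gives strict positivity of $v$ in $\Omg$, which the paper's argument does not.

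One minor imprecision: in (iii) and in the $\aa\to+\infty$ limit of (iv), the transition from strong resolvent convergence to convergence of the isolated eigenvalue is not quite automatic; strong resolvent convergence only guarantees lower semicontinuity of the spectrum (no spectrum of the limit disappears), not the absence of spurious eigenvalues. The paper closes this gap by invoking the specific spectral convergence theorem (Kato, Theorem~VIII.1.14) in combination with isolation; you should either cite that explicitly or argue as follows: monotonicity of $\aa\mapsto\Ev$ plus the upper bound $\Ev<\lm_1(\Omg)$ ensures the one-sided limits exist, and if a jump occurred then the unique eigenvalue below $\lm_1(\Omg)$ (guaranteed by~(ii)) would fail to approximate a point of the limiting spectrum, contradicting the lower semicontinuity just mentioned. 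This is a small repair and does not affect the overall soundness of your proof.
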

\begin{proof}
	(i)
	Let $u_1\in H^1_0(\Omg)\cap H^2(\Omg)$ 	be the ground state eigenfunction of $\OpD$. Without loss of generality, one can assume that $u_1$ is positive on $\Omg$. 
	Consider  a family of test functions
	\[
	 u_{1,\eps} \;:=\; u_1 + \eps \gy\,,\qquad \eps \in\dR\,.
	\]
	Owing to \eqref{eq:h}, $u_{1,\eps}\in \cD_y^\Omg$. 	
	Differentiating the scalar-valued function
	$\cL(\eps) := \frm[u_{1,\eps}] - \lm_1(\Omg)\|u_{1,\eps}\|^2_{L^2(\Omg)}$ 
	at $\eps = 0$ gives
	\[
		\cL'(0)
		\;=\;	
		-2\lm_1(\Omg)\int_\Omg u_1 \gy \dd x\,.
	\]
	As $\gy$ is positive on $\Omg$ (Proposition \ref{prop:Green1}\,(iii) and \eqref{eq:g}), we conclude that $\cL'(0) < 0$ and the min-max principle yields the inequality $\Ev < \lm_1(\Omg)$.
		
	(ii) As argued already for the proof of Proposition \ref{prop:basic}\,(i), $\Op$ and $\OpD$ differ in the resolvent sense by a rank-one operator. 
	From this, and from the fact that $\inf\s(\OpD)= \lm_1(\Omg)$, one can deduce \cite[\S 9.3, Theorem 3]{Birman-Solomjak-1980book} that the rank of the spectral projection for $\Op$ corresponding to the interval $(-\infty,\lm_1(\Omg))$ is either $0$ or $1$.
	Taking into account that $\Ev < \lm_1(\Omg)$, we eventually conclude that this rank equals to one,
	which is equivalent to the claim.

	(iii) 	The continuity of $\Ev$ with respect to $\aa$ directly follows from the spectral convergence result \cite[Theorem VIII.1.14]{Kato-perturbation} and from Proposition \ref{prop:basic}\,(iv).

	(iv) 
	By the min-max principle and \eqref{eq:h},
	\[
		\Ev	
		\;\le\; 
		\frac{\frm[\gy]}{\|\gy\|_{L^2(\Omg)}^2}
		\;=\;
		\ggy
		\big(\aa  -\hy\big)\xrightarrow[]{\;\alpha\to -\infty\;}\;-\infty\,,
	\]
	whence the first of the two claimed limits follows. The second limit is a consequence of the strong resolvent convergence in Proposition \ref{prop:basic}\,(v) and the spectral convergence result \cite[Theorem VIII.1.14]{Kato-perturbation}.
	
	(v) For the proof of this part and of the next one, let us switch to the shorthand $\lm_1 = \Ev$.
	As obviously $\Ef\in\mathrm{dom}\sfS^*$, then
	\[
	 \Ef \;=\; u_0 + c_1\sfR_{\rm D}^\Omg(\lm_1)g_{\lm_1,y}^\Omg +c_0 g_{\lm_1,y}^\Omg
	\]
       for suitable $u_0,c_0,c_1$ (owing to \eqref{eq:u_in_adjdom} above), whence
	\[
	\begin{split}
	 (\Op - \lm_1)\Ef \;&=\; \big(\sfS^* - \lm_1\big)
		\big(u_0 + c_1 \sfR_{\rm D}^\Omg(\lm_1) g_{\lm_1,y}^\Omg\big) \\
		&=\;(\OpD-\lambda_1)\big(u_0 + c_1 \sfR_{\rm D}^\Omg(\lm_1) g_{\lm_1,y}^\Omg\big)\,.
	\end{split}
	\]
	Now, $\Op\Ef=\lm_1\Ef$ implies $u_0 + c_1 \sfR_{\rm D}^\Omg(\lm_1) g_{\lm_1,y}^\Omg=0$, because $\lm_1 < \inf\s(\OpD)$. Therefore, $\Ef = c_0 g_{\lm_1,y}^\Omg$.
	Moreover, the choice $c_0>0$ yields a positive $\Ef$, as $g_{\lm_1,y}^\Omg > 0$ (Proposition \ref{prop:Green1}\,(iii)).
	
	(vi) By assumption $\lm_1 > 0$ and therefore (Proposition \ref{prop:basic}\,(ii)) $\aa \ge \hy$.
	As $\Ef\in \dom\Op \subset \dom \frm$, then
	\[
		\Ef \;=\; v + \xi \gy
	\]
	for some $v\in H^1_0(\Omg)$ and $\xi\in\dC$ (owing to \eqref{eq:h} above). It is not restrictive to assume that $\xi \ge 0$ and that consequently (based on (v)) $v$ is real-valued. It remains to show that $v \ge 0$.
	To this aim, we pick the test function $u := |v| + \xi \gy$.
	One has
	\begin{equation}\label{eq:uu}
	\begin{split}
		\|u\|^2_{L^2(\Omg)} 
		& = 
		\|v\|^2_{L^2(\Omg)} +2\xi(|v|,\gy)_{L^2(\Omg)}
			+ |\xi|^2\|\gy\|^2_{L^2(\Omg)} \\
		& \ge
		\|v\|^2_{L^2(\Omg)} +2\xi (v,\gy)_{L^2(\Omg)}
		+ |\xi|^2\|\gy\|^2_{L^2(\Omg)} 
		= \|\Ef\|^2_{L^2(\Omg)},
	\end{split}
	\end{equation}
	where $\gy > 0$ was used (owing to \eqref{eq:g}, \eqref{eq:brevity}, and Proposition \ref{prop:Green1}\,(iii)). Then 
	\[
		\lm_1 \;\le\; \frac{\frm[u]}{\|u\|^2_{L^2(\Omg)}}
		\;=\;
		\frac{\|\nb v\|^2_{L^2(\Omg)} + 
			\Big(\aa - \hy\Big)
			|\xi|^2}{\|u\|^2_{L^2(\Omg)}}
		\;\le\;
		\frac{\frm[\Ef]}{\|\Ef\|^2_{L^2(\Omg)}} \;=\; \lm_1\,,
	\]
	having applied the min-max principle in the first step, the identity $\|\nb |v|\|_{L^2(\Omg)} = \|\nb v\|_{L^2(\Omg)}$ in the second, and the inequalities $\aa \ge \hy$ and~\eqref{eq:uu} in the third. The eigenvalue $\lm_1$ being simple, then necessarily $\Ef = u$, and thus $v = |v|$ is non-negative.	
\end{proof}

\begin{remark}\label{rem:bottom_always_below}
 Proposition \ref{prop:EVEF} shows that when passing from the Dirichlet Laplacian $\OpD$ to any of its \emph{singular perturbations} $\Op$, an eigenvalue is always created below the threshold $\lambda_1(\Omega)$ \emph{irrespective of the sign and magnitude of $\alpha$}. Thus, each self-adjoint extension $\Op$ of the symmetric operator $\mathsf{S}$ defined in \eqref{eq:S} has bottom $\inf\sigma(\Op)=\Ev$ strictly below the bottom of the Friedrichs extension of $\mathsf{S}$, i.e., below $\inf\sigma(\OpD)=\lambda_1(\Omega)$. Or, in other words, there are no other extensions besides the Friedrichs one with the same lower bound of $\mathsf{S}$. The above behaviour is not generic. For instance, singular perturbations of the self-adjoint Laplacian on $\mathbb{R}^d$ when $d=3$ only produce an eigenvalue below the Friedrichs threshold for a specific range of the interaction strength \cite[Chapter I.1]{albeverio-solvable} (and the same holds for singular perturbations of the Schr\"{o}dinger operator $-\Delta+q|x|^{-1}$ with $q>0$ -- see, e.g., \cite[Chapter I.2]{albeverio-solvable} or \cite{GM-hydrogenoid-2018}), while when $d=2$ every self-adjoint extension does have an eigenvalue below the Friedrichs threshold \cite[Chapter I.5]{albeverio-solvable}. A general characterisation of the possibility of having or not a self-adjoint extension
 with the lower bound strictly below the Friedrich's lower bound may be found in \cite{GM-2020-Friedrichs}. 
\end{remark}

Next we focus on the map $z\mapsto \hzy$ defined in \eqref{eq:HzOmegaxy} and \eqref{eq:g} above. We shall show that it determines a suitable spectral condition on $\Ev$ and displays convenient monotonicity; based on such properties we can finally deduce the strict monotonicity of $\Ev$ with respect to $\aa$.

\begin{prop}\label{prop:BS}
 Same assumptions as in Proposition \ref{prop:EVEF}. Then:
	\begin{myenum}
	\item one has $\Ev = z$ with $z \in (-\infty,\lm_1(\Omg))$ if and only if $\hzy = \aa$; 
	\item the function $(-\infty,\lm_1(\Omg))\ni z\mapsto 	\hzy$ is continuous and monotone increasing;
	\item one has $\lm_1^{\aa_1}(\Omg,y) < \lm_1^{\aa_2}(\Omg,y)$ for $\aa_1 < \aa_2$.
\end{myenum}
\end{prop}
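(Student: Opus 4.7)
For part (i), the plan is to exploit the explicit representation of the ground state from Proposition \ref{prop:EVEF}(v). If $\Ev = z$ for some $z \in (-\infty, \lm_1(\Omg))$, then the corresponding eigenfunction may be written as $\Ef = c\, g_{z,y}^\Omg$ with $c \ne 0$. Since $z < \lm_1(\Omg)$, the decomposition in \eqref{eq:HalphayD} applies with this very $z$, and its uniqueness together with $\Ef \in \dom \Op$ forces the coefficient of the $\sfR_{\rm D}^\Omg(z)\, g_{z,y}^\Omg$ summand in \eqref{eq:HalphayD} to vanish, yielding $\alpha = h_{z,y}^\Omg$. Conversely, if $\alpha = h_{z,y}^\Omg$ for some $z < \lm_1(\Omg)$, then choosing $u_0 = 0$ and $c = 1$ in \eqref{eq:HalphayD} exhibits $g_{z,y}^\Omg$ as a nonzero element of $\dom \Op$ on which $\Op$ acts by multiplication by $z$; the simplicity stated in Proposition \ref{prop:EVEF}(ii) then forces $z = \Ev$.

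For part (ii), I would invoke the first resolvent identity for $\OpD$ read at the level of integral kernels,
\begin{equation*}
 G_{z_1}^\Omg(x, y) - G_{z_2}^\Omg(x, y) = (z_1 - z_2) \int_\Omg G_{z_1}^\Omg(x, \xi)\, G_{z_2}^\Omg(\xi, y)\, \dd \xi,
\end{equation*}
valid for $z_1, z_2 < \lm_1(\Omg)$. The left-hand side equals $\cH_{z_1}^\Omg(x,y) - \cH_{z_2}^\Omg(x,y)$ because the common free singularity $\cS(x,y)$ cancels, and is therefore continuous up to $x = y$ by Proposition \ref{prop:Green1}(i). The right-hand side equals $(z_1 - z_2)\bigl(\sfR_{\rm D}^\Omg(z_1)\, g_{z_2, y}^\Omg\bigr)(x)$, and is continuous in $x$ as well, since $g_{z_2, y}^\Omg \in L^2(\Omg)$ by Proposition \ref{prop:Green2}(i) and $\sfR_{\rm D}^\Omg(z_1)$ sends $L^2(\Omg)$ into $H^2(\Omg) \hookrightarrow C(\overline{\Omg})$ for $d \le 3$. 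Evaluating at $x = y$ yields the Krein-type identity
\begin{equation*}
 h_{z_1, y}^\Omg - h_{z_2, y}^\Omg = (z_1 - z_2) \int_\Omg g_{z_1, y}^\Omg(\xi)\, g_{z_2, y}^\Omg(\xi)\, \dd \xi.
\end{equation*}
Strict positivity of the Green function on $\Omg \times \Omg$ (Proposition \ref{prop:Green1}(iii)) makes the integral strictly positive, giving strict monotonicity; continuity as $z_1 \to z_2$ is then immediate, since the integrand stays locally bounded in $L^2$.

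Part (iii) is a direct consequence of (i) and (ii): the strictly increasing map $z \mapsto h_{z,y}^\Omg$ on $(-\infty, \lm_1(\Omg))$ is, by (i), inverted by $\alpha \mapsto \Ev$, so the latter is itself strictly increasing. The technical heart of the argument is the Krein-type identity in (ii), which one must read carefully at the singular point $x = y$ by exploiting the cancellation of the free singularity; everything else is bookkeeping with the decomposition \eqref{eq:HalphayD} and with properties already collected in Sections \ref{sec:preliminaries}--\ref{sec:DirichletLaplacian}.
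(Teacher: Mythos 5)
Your proposal is correct and, for part (ii), follows a genuinely different route from the paper, so a comparison is warranted. Part (i) runs essentially along the paper's lines (the paper phrases the forward direction through the membership condition for $\dom\Op$ rather than through uniqueness of the decomposition, but the content is the same; also note that what you invoke from Proposition~\ref{prop:EVEF}(ii) is really \emph{uniqueness} of the eigenvalue below $\lm_1(\Omg)$, not just its simplicity). For part (ii), the paper argues \emph{softly}: it observes via Proposition~\ref{prop:EVEF}(iii)--(iv) and the intermediate value theorem that any $z_1 < z_2 < \lm_1(\Omg)$ are attained as $\lm_1^{\aa_1}(\Omg,y)$, $\lm_1^{\aa_2}(\Omg,y)$ for some $\aa_1 < \aa_2$, and then invokes part (i) to read off $h_{z_1,y}^\Omg = \aa_1 < \aa_2 = h_{z_2,y}^\Omg$; notably the paper does not spell out a separate argument for continuity of $z \mapsto \hzy$. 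Your route is more \emph{quantitative}: you derive the Krein-type identity
\begin{equation*}
 h_{z_1, y}^\Omg - h_{z_2, y}^\Omg \;=\; (z_1 - z_2) \int_\Omg g_{z_1, y}^\Omg(\xi)\, g_{z_2, y}^\Omg(\xi)\, \dd \xi
\end{equation*}
from the first resolvent identity read at the diagonal (licit by the cancellation of the free singularity and Sobolev embedding $H^2(\Omg)\hookrightarrow C(\overline\Omg)$ for $d\le 3$), and both strict monotonicity \emph{and} continuity drop out at once from positivity of the Green function and boundedness of the integral. This makes your (ii) independent of the spectral statements (iii)--(iv) of Proposition~\ref{prop:EVEF}, and actually supplies the continuity argument the paper leaves implicit. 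Your part (iii) is then a cleaner one-line deduction from (i) and the strict monotonicity of (ii), whereas the paper establishes non-strict monotonicity from Proposition~\ref{prop:basic}(iii) and rules out equality by a contradiction via part (i); both are fine.
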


\begin{proof}
	(i) If $\hzy = \aa$, then  the function $u = \gzy\in \ker(\sfS^* -z)$ belongs to $\dom\Op$ (Proposition~\ref{prop:OP}\,(i)). Proposition \ref{prop:EVEF}\,(v) then implies $\Ev = z$. Conversely, if $\Op u = \Ev u$ for some $u \in \dom\Op$, then  $u = c \gzy$ with $z = \Ev$ and some $c\ne 0$ 
	(Proposition~\ref{prop:EVEF}\,(v)), whence, using again the characterisation of $y\in\mathrm{dom}\Op$ from Proposition~\ref{prop:OP}\,(i), $\aa = \hzy$.

	(ii) 	Let $z_1 < z_2 < \lm_1(\Omg)$ be arbitrary. Owing to Proposition \ref{prop:EVEF}\,(iii) and\,(iv), such values are surely attained (a priori multiple times) by the function $\mathbb{R}\ni\alpha\mapsto\Ev$, and in fact it is always possible to select $\alpha_1<\alpha_2$ such that $z_1 = \lm_1^{\aa_1}(\Omg,y)$ and $z_2 = \lm_1^{\aa_2}(\Omg,y)$.
	Applying part (i) one then finds $h_{z_j,y}^\Omg = \aa_j$, $j\in\{1,2\}$. Thus, $h_{z_1,y}^\Omg < h_{z_2,y}^\Omg$.

	(iii)
	Proposition~\ref{prop:basic}\,(iii) and the min-max principle imply $\lm_1^{\aa_1}(\Omg,y) \le \lm_1^{\aa_2}(\Omg,y)$ whenever $\alpha_1<\alpha_2$. We are left with excluding the case of equality. If 
	$\lm_1^{\aa_1}(\Omg,y) = \lm_1^{\aa_2}(\Omg,y) =: z$
	held, then for any $\aa\in (\aa_1,\aa_2)$
	we would have $\Ev = z$. Hence, by~(i) we would get
	$\aa = \hzy$ for all $\aa\in(\aa_1,\aa_2)$, thus yielding an obvious contradiction.
\end{proof}

\section{The Faber-Krahn inequality for the operator $\Op$}\label{sec:FaberKrahnForHDpoint}

We are finally in the condition to formulate and prove the first main result of the present work, namely the Faber-Krahn inequality for the Dirichlet Laplacian on a bounded domain with a point interaction.

\begin{thm}\label{thm:FKDL}
Let $\Omg$ be a bounded domain in $\dR^d$ with $C^\infty$-boundary, $d\in\{2,3\}$, and let $y\in\Omg$, $\alpha\in\mathbb{R}$. Correspondingly, let $\Ev$ be the lowest eigenvalue of the operator $\Op$ qualified in Proposition \ref{prop:OP}. Let $\cB\subset\dR^d$ be a ball centred at the origin $\bfo\in\dR^d$	and satisfying $|\cB| = |\Omg|$. Then 
	\[
		\EvB \le \Ev\,.
	\]
\end{thm}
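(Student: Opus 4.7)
My plan is to recast the equality $z=\Ev$ as the scalar equation $\hzy=\aa$ furnished by Proposition~\ref{prop:BS}(i), and then to push its solution leftwards on the rearranged ball by means of the pointwise comparison $\hzy\le h_{z,\bfo}^\cB$ from Proposition~\ref{prop:Green2}(iii). The easy regime $\Ev\ge\lm_1(\cB)$ is disposed of at once: Proposition~\ref{prop:EVEF}(i) gives $\EvB<\lm_1(\cB)\le\Ev$.

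Assume therefore $\Ev<\lm_1(\cB)$ and set $z:=\Ev$. Since the classical Faber-Krahn inequality gives $\lm_1(\cB)\le\lm_1(\Omg)$, such $z$ lies in the range $(-\infty,\lm_1(\Omg))$ where Proposition~\ref{prop:Green2}(iii) applies, and hence
\[
h_{z,\bfo}^\cB\;\ge\;\hzy\;=\;\aa\;=\;h_{\EvB,\bfo}^\cB,
\]
the first equality being Proposition~\ref{prop:BS}(i) on $\Omg$ and the last one Proposition~\ref{prop:BS}(i) on $\cB$ (legitimate because $\EvB<\lm_1(\cB)$, again by Proposition~\ref{prop:EVEF}(i)). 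The strict monotonicity of $z\mapsto h_{z,\bfo}^\cB$ on $(-\infty,\lm_1(\cB))$, namely Proposition~\ref{prop:BS}(ii) applied on the ball, then forces $\Ev=z\ge\EvB$.

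As a second, variational route, available whenever $\Ev>0$ (equivalently $\aa\ge\hy$, by Proposition~\ref{prop:basic}(ii)), I would imitate the original Faber-Krahn proof by rearranging the ground state. Proposition~\ref{prop:EVEF}(vi) produces a decomposition $\Ef=v+\xi\gy$ with $v\in H^1_0(\Omg)$, $v\ge 0$, $\xi\ge 0$; one then tests the ball form $\frh_{\aa,\bfo}^\cB$ against $w:=v^*+\xi\,g_{0,\bfo}^\cB\in\cD_\bfo^\cB$. P\'olya-Szeg\H{o} (Corollary~\ref{prop:rearr}(iii)) together with Proposition~\ref{prop:Green2}(iii) at $z=0$ give $\frh_{\aa,\bfo}^\cB[w]\le\frm[\Ef]$, while the conservation of the $L^2$-norm and Hardy-Littlewood (Corollary~\ref{prop:rearr}(i)--(ii)), combined with the pointwise bound $(\gy)^*\le g_{0,\bfo}^\cB$ from Proposition~\ref{prop:Green2}(ii), give $\|w\|_{L^2(\cB)}\ge\|\Ef\|_{L^2(\Omg)}$; the min-max principle then closes the argument.

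The delicate point in this alternative route is that $\Ef$ itself fails to be in $H^1$ near $y$, so it cannot be rearranged directly: the decomposition of Proposition~\ref{prop:EVEF}(vi) must be used to isolate the singular part $\xi\gy$ from the regular part $v$, after which all three summands of $\|w\|_{L^2(\cB)}^2$ (the pure $v^*$-term, the cross term $2\xi(v^*,g_{0,\bfo}^\cB)$, and the pure Green-function term) must swing in the favourable direction simultaneously. This works precisely because $\xi\ge 0$ can be arranged and $\gy>0$ pointwise, so that Hardy-Littlewood does not fight the sign of the cross term; for the first route the analogous technical subtlety is merely ensuring that Proposition~\ref{prop:Green2}(iii) can be invoked at the relevant spectral parameter, which is exactly what the preliminary reduction based on the classical Faber-Krahn inequality achieves.
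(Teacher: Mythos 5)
Your proposal is correct and follows essentially the same two-pronged strategy as the paper, which indeed gives precisely these two alternative proofs. Your first route is the "mirror image" of the paper's first proof: you invoke Bandle's inequality at $z=\Ev$ and the monotonicity of $z\mapsto h_{z,\bfo}^{\cB}$ on the ball (which is why you need the preliminary split at $\lm_1(\cB)$), whereas the paper invokes Bandle at $z=\EvB$ and monotonicity of $z\mapsto\hzy$ on $\Omg$ (for which $\EvB<\lm_1(\cB)\le\lm_1(\Omg)$ is automatic); both are the same ingredients shuffled. Your second, variational route coincides step by step with the paper's second proof, including the key observation that $\xi\ge0$ and positivity of the Green function keep the Hardy-Littlewood estimate of the cross term and the Green-function comparison $(\gy)^*\le g_{\bfo}^{\cB}$ aligned with the desired inequality.
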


Thus: under fixed volume of the domain and fixed interaction strength parameter $\aa \in\dR$, and with generic position of point interaction centre, the principal eigenvalue is minimised by the ball with the point interaction supported at the ball's centre.

The ordinary Faber-Krahn inequality is retrieved from Theorem \ref{thm:FKDL} in the limit $\aa \arr +\infty$.

\begin{remark} We have already argued in Section \ref{sec:intro} that in a sense the result expressed by Theorem \ref{thm:FKDL} has the same spirit of the optimal placement of an `obstacle' or `impurity' within the considered domain. In fact, Theorem \ref{thm:FKDL} implies that the position $y$ of the point interaction's support which minimises the lowest eigenvalue of the Hamiltonian ${\sfH_{\alpha,y}^{\cB}}$ on the ball is precisely the ball's centre. This holds \emph{irrespective of the sign of $\alpha$}, hence of the attractive or repulsive nature of the interaction supported at $y$ (here attraction or repulsion is meant in the ordinary sense of scattering theory, thus based on the sign of the scattering length). This is not in contradiction with the fact (see the already-mentioned analysis \cite{Harrell-Kroeger-Kurata-2001}) that the optimal placement, inside the ball, of a bounded potential $V_y$ localised around $y$, in order to minimise the lowest eigenvalue of $\OpD+V_y$, is achieved by putting $y$ at the centre in case of negative potential well, and $y$ at the boundary in case of positive bump-like potential. In this respect, the point interaction modelled by ${\sfH_{\alpha,y}^{\cB}}$ has the same effect of a negative and localised potential well (whence indeed the lowering of the lowest eigenvalue of $\Op$ with respect to $\OpD$, Remark \ref{rem:bottom_always_below}). In fact, by inspection of the explicit eigenfunctions (computed in \cite[Sect.~4]{Blanchard-figari-Mantile-2007}) one sees that they concentrate around $y$, analogously to the eigenfunctions of $\OpD+V_y$ with $V_y<0$.
\end{remark}

For the proof of Theorem \ref{thm:FKDL} it is convenient to introduce the new shorthand
 \[
  \lm_\Omg \;:= \;\Ev\,,\qquad \lm_\cB \;:=\; \EvB
 \]
 at fixed $\aa\in\dR$ and $y\in \Omg$. The thesis to prove is therefore $\lm_\cB\leq \lm_\Omg$. Prior to that, we can rule out the possibility $\lm_\cB > 0$ and $\lm_\Omg \le 0$.

\begin{lemma}
 Under the assumptions of Theorem \ref{thm:FKDL} it is impossible that simultaneously $\lm_\cB > 0$ and $\lm_\Omg \le 0$.
\end{lemma}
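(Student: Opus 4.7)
The plan is to reduce the sign of each of $\lm_\Omg$ and $\lm_\cB$ to a comparison between the extension parameter $\aa$ and the regular part of the Green function on the diagonal, and then to invoke Proposition~\ref{prop:Green2}\,(iii) with $z=0$ to derive an immediate contradiction.

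First I would translate the sign conditions on the principal eigenvalues into inequalities involving $h_y^\Omg$ and $h_\bfo^\cB$. Proposition~\ref{prop:basic}\,(ii) states that $\Op\ge 0$ if and only if $\aa\ge h_y^\Omg$, which, since $\Op$ has compact resolvent (Proposition~\ref{prop:basic}\,(i)) and $\lm_\Omg=\inf\sigma(\Op)$, is equivalent to $\lm_\Omg\ge 0$. Strict monotonicity of $\aa\mapsto\Ev$ (Proposition~\ref{prop:BS}\,(iii)) combined with Proposition~\ref{prop:BS}\,(i) applied at $z=0$ further upgrades this to the equivalences
\[
\lm_\Omg\le 0 \iff \aa\le h_y^\Omg,\qquad \lm_\cB>0 \iff \aa>h_\bfo^\cB,
\]
the first being the negation of $\lm_\Omg>0\Leftrightarrow \aa>h_y^\Omg$, and the second being the analogous statement on the ball.

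Next I would argue by contradiction: suppose that simultaneously $\lm_\cB>0$ and $\lm_\Omg\le 0$. The two equivalences above force
\[
h_\bfo^\cB \,<\, \aa \,\le\, h_y^\Omg,
\]
so in particular $h_\bfo^\cB < h_y^\Omg$. However, Proposition~\ref{prop:Green2}\,(iii) applied at $z=0$ (with the shorthand \eqref{eq:brevity}) yields exactly the opposite inequality $h_y^\Omg \le h_\bfo^\cB$, a contradiction. This completes the proof.

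I do not anticipate a main obstacle: the lemma is essentially a translation of the rearrangement comparison of harmonic/conformal radii encoded in Proposition~\ref{prop:Green2}\,(iii) into spectral language via Proposition~\ref{prop:basic}\,(ii). The only point requiring care is to turn the non-strict form inequality of Proposition~\ref{prop:basic}\,(ii) into the \emph{strict} equivalence $\lm_\Omg>0\Leftrightarrow\aa>h_y^\Omg$ (and likewise on $\cB$), which is why the strict monotonicity statement in Proposition~\ref{prop:BS}\,(iii) is needed.
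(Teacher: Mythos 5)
Your proof is correct and follows essentially the same route as the paper: translate the sign of each principal eigenvalue into a comparison between $\aa$ and the regular part of the Green function via Proposition~\ref{prop:basic}\,(ii), upgrade to strict inequality via Proposition~\ref{prop:BS}\,(i), and conclude with Bandle's inequality from Proposition~\ref{prop:Green2}\,(iii). The paper argues directly (showing $\lm_\cB > 0 \Rightarrow \lm_\Omg > 0$) rather than by contradiction, but the content is identical. One small remark: the strict equivalence $\lm_\Omg > 0 \Leftrightarrow \aa > h_y^\Omg$ already follows from Proposition~\ref{prop:basic}\,(ii) together with Proposition~\ref{prop:BS}\,(i) alone (as the paper does), so your appeal to the strict monotonicity in Proposition~\ref{prop:BS}\,(iii) is not actually needed.
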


\begin{proof}
 Assuming $\lm_\cB > 0$, Proposition~\ref{prop:basic}\,(ii) combined with Proposition~\ref{prop:BS}\,(i) imply $\aa > h^\cB_\bfo$. Moreover, $h^\cB_\bfo\geq \hy$ (Proposition~\ref{prop:Green2}\,(iii) and \eqref{eq:brevity}). Hence, $\aa > \hy$. Then Proposition \ref{prop:basic}\,(ii) and Proposition~\ref{prop:BS}\,(i) imply $\lm_\Omg > 0$.	
\end{proof}

Theorem \ref{thm:FKDL} is therefore to be proved in the only possible non-trivial scenario that $\lm_\cB$ and $\lm_\Omg$ have the same sign (either $\lm_\cB, \lm_\Omg < 0$, or $\lm_\cB, \lm_\Omg > 0$), because the case $\lm_\cB \le 0$, $\lm_\Omg \ge 0$ is trivial, and the case $\lm_\cB > 0$, $\lm_\Omg \le 0$ is impossible.

At this point, as announced in Section~\ref{sec:intro}, we find it instructive to present two alternative routes. The first applies to all cases and is entirely based on the
\emph{Bandle's inequality} $\hzy \le h_{z,\bfo}^\cB$ underlying Proposition~\ref{prop:Green2}\,(iii).  For the first proof of Theorem \ref{thm:FKDL}, we exploit Bandle's inequality in combination with our monotonicity analysis for $\Op$ developed in Section~\ref{sec:lowesteigenvalue}.

\begin{proof}[Proof of Theorem \ref{thm:FKDL} -- first version, based on the Bandle's inequality]~

Owing to Proposition~\ref{prop:BS}\,(i), $h_{\lm_\Omg,y}^\Omg =\alpha= h_{\lm_\cB,\bfo}^\cB$. Moreover, Proposition~\ref{prop:Green2}\,(iii) implies $h_{\lm_\cB,y}^\Omg \le h_{\lm_\cB,\bfo}^\cB$. Hence, $h_{\lm_\cB,y}^\Omg \le h_{\lm_\Omg,y}^\Omg$. Taking into account the increasing monotonicity of the function $(-\infty,\lm_1(\Omg))\ni z\mapsto h_{z,y}^\Omg$ (Proposition~\ref{prop:BS}\,(ii)), we conclude from $h_{\lm_\cB,y}^\Omg \le h_{\lm_\Omg,y}^\Omg$ that $\lm_\cB \le \lm_\Omg$.
\end{proof}

Next, we present an independent proof, applicable to the case $\lm_\Omg > 0$, which has a two-fold virtue. First, it provides a more direct adaptation of the original Faber-Krahn inequality's demonstration scheme to the present playground of point interaction Hamiltonian. Second, it is completely independent of Bandle's inequality and in fact reproves it by alternative means (Corollary \ref{cor:retrieving-Bandle} below).

\begin{proof}[Proof of Theorem \ref{thm:FKDL} -- second version for $\lm_\Omg > 0$, Bandle-inequality-independent]~

Let $\Ef\colon \Omg\arr\dR$ be the eigenfunction of $\Op$ corresponding to its lowest eigenvalue $\lm_\Omg$. By additional assumption, $\lm_\Omg > 0$. Then $\Ef$ decomposes as 
	\[
	 \Ef \;=\; v + \xi \gy
	\]
for some $v\in H^1_0(\Omg)$, $v \ge 0$ and $\xi \ge 0$ (Proposition~\ref{prop:EVEF}\,{\rm (vi)}). 
	Furthermore, $v^* \in H^1_0(\cB)$ (Corollary \ref{prop:rearr}\,(iii)), whence
	\[
	 v^* + \xi g_\bfo^\cB\; \in\; \dom\frh_{\aa,\bfo}^\cB
	\]
         (formula \eqref{eq:h} above). Then applying the min-max principle to the test function
	$v^* + \xi g_\bfo^\cB$ yields
	\begin{equation}\label{eq:min-max}
		\lm_\cB
		\;\le\; 
		\frac{\frmB\big[v^* + \xi g_\bfo^\cB\big]}
		{\;\| v^* + \xi g_\bfo^\cB \|^2_{L^2(\cB)}}\,.
	\end{equation}
	It suffices to show that this upper bound does not exceed $\lm_\Omg$. To this aim
	we estimate the numerator and the denominator
	separately.

	Concerning the numerator, we find
		\begin{equation}\label{eq:numerator}
	\begin{split}
		\frmB\big[v^*+\xi g_\bfo^\cB\big]  
		\;& =\;
		\|\nb v^*\|^2_{L^2(\cB)} + 
		\big(\aa-h_\bfo^\cB\big)|\xi|^2\\
		\;& \le\;
		\|\nb v\|^2_{L^2(\Omg)} + 
		\big(\aa-h_\bfo^\cB\big)|\xi|^2\\
		\;& \le\; 
		\|\nb v\|^2_{L^2(\Omg)} + 
		\big(\aa-\hy\big)|\xi|^2
		\;=\; \frm\big[v+\xi \gy\big]\,,
	\end{split}	
	\end{equation}
	having used \eqref{eq:h} in the first step, the P\'{o}lya-Szeg\H{o} inequality (Corollary~\ref{prop:rearr}\,(iii)) in the second, the bound
	\begin{equation*}
		\aa - h_{\bfo}^\cB
		\le
		\aa - h_y^\Omg.
	\end{equation*}
	in the third step (which in turn follows from Proposition~\ref{prop:Green2}\,(iii) and \eqref{eq:brevity}), and again \eqref{eq:h} in the last step.

	Concerning the denominator in \eqref{eq:min-max}, first we observe that
	\begin{equation}\label{eq:gammas_ineq}
	\ggy
	\;=\;
	\|\gy\|_{L^2(\Omg)}^{-2}
	\;=\; 
	\|(\gy)^*\|_{L^2(\cB)}^{-2}
	\;\ge \;
	\|g^\cB_\bfo\|_{L^2(\cB)}^{-2}
	\;=\; 
	\gg_\bfo^\cB\,,
	\end{equation}
	as a consequence of \eqref{eq:brevity}, of Corollary~\ref{prop:rearr}\,(i), and of  Proposition~\ref{prop:Green2}\,(ii). We then estimate
		\begin{equation}\label{eq:denominator}
	\begin{split}
		\| v^* + \xi g_\bfo^\cB \|^2_{L^2(\cB)}
		\;& =\; 
		\|v^*\|^2_{L^2(\cB)} + 
		2\xi(v^*, g_\bfo^\cB)_{L^2(\cB)}
		+(\gg_\bfo^{\cB})^{-1}|\xi|^2\\
		\;& =\;
		\|v\|^2_{L^2(\Omg)} + 
		2\xi(v^*, g_\bfo^\cB)_{L^2(\cB)}
		+(\ggy)^{-1}|\xi|^2\\
		\;& \ge\;
		\|v\|^2_{L^2(\Omg)} + 
		2\xi(v^*, (\gy)^*)_{L^2(\cB)}
		+(\ggy)^{-1}|\xi|^2\\
		\;& \ge\;
		\|v\|^2_{L^2(\Omg)} + 
		2\xi(v, \gy)_{L^2(\Omg)}
		+(\ggy)^{-1}|\xi|^2\\
		\;& =\; \|v + \xi\gy\|^2_{L^2(\Omg)}\,.
	\end{split}	
	\end{equation}
	For \eqref{eq:denominator} we used that $v,v^*\ge 0$ and $\xi \ge 0$, and we applied Corollary~\ref{prop:rearr}\,(i) and \eqref{eq:gammas_ineq} in the second step, Proposition~\ref{prop:Green2}\,(ii) and \eqref{eq:brevity} in the third, and the Hardy-Littlewood inequality (Corollary~\ref{prop:rearr}\,(ii)) in the fourth.

	Last, plugging \eqref{eq:numerator} and \eqref{eq:denominator} into \eqref{eq:min-max}, and using \eqref{eq:h}, yields
	\begin{equation*}
		\lm_\cB 
		\;\le\;
		\frac{\frm\big[v+\xi \gy\big]}{\|v+\xi \gy\|^2_{L^2(\Omg)}} \;=\; \lm_\Omg\,,
	\end{equation*}
	which completes the proof.
\end{proof}

\begin{cor}\label{cor:retrieving-Bandle}
 Let $\Omg$ be a bounded domain in $\dR^d$ with $C^\infty$-boundary, $d\in\{2,3\}$, and let $y\in\Omg$, $\alpha\in\mathbb{R}$. Let $\cB\subset\dR^d$ be a ball centred at the origin $\bfo\in\dR^d$ and satisfying $|\cB| = |\Omg|$. For $z \in (0,\lm_1(\Omg))$, the relative Green functions $h_{z,y}^\Omg$ and $h_{z,\bfo}^\cB$ defined in \eqref{eq:HzOmegaxy} and \eqref{eq:g} satisfy
 \[
  h_{z,\bfo}^\cB \;\ge\; h_{z,y}^\Omg\,.
 \]
\end{cor}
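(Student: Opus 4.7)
The plan is to deduce the inequality spectrally from Theorem~\ref{thm:FKDL} by ``swapping'' the roles of the energy $z$ and the extension parameter $\aa$ via the characterisation $\Ev = z \Leftrightarrow h_{z,y}^\Omg = \aa$ provided by Proposition~\ref{prop:BS}(i). Concretely, given $z \in (0, \lm_1(\Omg))$, I set $\aa := h_{z,y}^\Omg$, and apply Proposition~\ref{prop:BS}(i) to the pair $(\Omg,y)$ to get $\Ev = z$. In particular $\Ev > 0$, which is precisely the regime in which the Bandle-inequality-free version of the proof of Theorem~\ref{thm:FKDL} operates.

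Invoking that Theorem at this value of $\aa$ yields $\EvB \le \Ev = z$. A second application of Proposition~\ref{prop:BS}(i), this time to the pair $(\cB,\bfo)$ at the \emph{same} $\aa$, identifies $\EvB$ as the unique point at which $h_{\cdot,\bfo}^\cB$ equals $\aa$, namely $h_{\EvB,\bfo}^\cB = \aa$. Combining this with the strict monotonicity of the map $z' \mapsto h_{z',\bfo}^\cB$ from Proposition~\ref{prop:BS}(ii), specialised to $\Omg = \cB$ and $y = \bfo$, together with $\EvB \le z$, yields
\[
	h_{z,\bfo}^\cB \;\ge\; h_{\EvB,\bfo}^\cB \;=\; \aa \;=\; h_{z,y}^\Omg,
\]
which is the claim.

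The main point to verify is the absence of circularity: the Bandle-inequality-free proof of Theorem~\ref{thm:FKDL} uses Proposition~\ref{prop:Green2}(iii) only at $z = 0$ (i.e.\ the harmonic/conformal-radius comparison $\hy \le h_\bfo^\cB$), whereas the Corollary upgrades that same comparison to every $z \in (0, \lm_1(\Omg))$, so no logical loop arises. The restriction $z > 0$ in the hypothesis is exactly what is needed to land in the positive-eigenvalue regime to which that version of Theorem~\ref{thm:FKDL} applies; for $z \le 0$ the rearrangement-based ``Bandle-free'' argument is not available, and one would have to fall back on Proposition~\ref{prop:Green2}(iii) itself.
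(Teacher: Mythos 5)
Your argument is correct and coincides with the paper's: both proofs fix $z\in(0,\lm_1(\Omg))$, choose $\aa$ so that $z=\Ev$ (you do this by setting $\aa:=h_{z,y}^\Omg$ and invoking Proposition~\ref{prop:BS}(i); the paper instead appeals to the continuity/range statements of Proposition~\ref{prop:EVEF}(iii)--(iv), which is the same thing phrased the other way around), apply the rearrangement-based proof of Theorem~\ref{thm:FKDL} to get $\EvB\le z$, and then translate back via Proposition~\ref{prop:BS}(i)--(ii). Your remark on non-circularity is also precisely the observation the paper relies on.
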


\begin{proof}
 As announced, this is rather a corollary of the second proof of Theorem \ref{thm:FKDL}.
 A generic $z \in (0,\lm_1(\Omg))$ can be written as $z=\lm_1^\aa(\Omg,y)$ for some $\alpha\in\mathbb{R}$, because the function $\alpha\mapsto\lm_1^\aa(\Omg,y)$ attains with continuity all real values below $\lambda_1(\Omega)$ (Proposition \ref{prop:EVEF}\,(iii) and (iv)). For such $\alpha$ and such positive value $\lm_1^\aa(\Omg,y)$, the second proof of Theorem \ref{thm:FKDL} shows that $\lm_1^\aa(\Omg,y)\ge\lm_1^\aa(\cB,\bfo)$. Now, on the one hand
 \[
  \alpha\;=\;h_{\lm_1^\aa(\cB,\bfo),\mathbf{0}}^{\cB}\;=\;h_{\lm_1^\aa(\Omg,y),y}^{\Omega}\;=\;h_{z,y}^\Omg
 \]
 (Proposition \ref{prop:BS}\,(i)). On the other hand, the inequality $\lm_1^\aa(\Omg,y)\ge\lm_1^\aa(\cB,\bfo)$ implies
 \[
  h_{z,\bfo}^\cB\;=\;h_{\lm_1^\aa(\Omg,y),\mathbf{0}}^{\cB}\;\geq\;h_{\lm_1^\aa(\cB,\bfo),\mathbf{0}}^{\cB}
 \]
 (Proposition \ref{prop:BS}\,(ii)). Combining these formulas yields the conclusion.
 \end{proof}

 It is admittedly remarkable that the estimate $h_{z,\bfo}^\cB \ge h_{z,y}^\Omg$ (Proposition~\ref{prop:Green2}\,(iii), Corollary \ref{cor:retrieving-Bandle}), involving Green functions of Dirichlet Laplacians, is so intimately connected with the spectral theory of the Hamiltonian with a point interaction in bounded domain.

\section{Systems with Coulomb interactions}\label{sec:Coulomb}

Let us move now on to the second main focus of the present work, namely Faber-Krahn-type inequalities
for one- and two-body systems with Coulomb interactions.

%

\subsection{One-body case}\label{ssec:1body}

As outlined already in Section \ref{sec:intro}, we shall establish the following result: under fixed volume of the domain and fixed strength of the attractive Coulomb interaction between the particle and the impurity, 
the principal eigenvalue is minimised by the ball with the impurity located in its center.

For the present setting, let $\Omg\subset\dR^3$ be a bounded $C^\infty$-smooth domain and
let the point $y\in \Omg$. For $q\in\dR$, we consider, on the Hilbert space $L^2(\Omg)$, the quadratic form
\begin{equation}\label{key}
\begin{split}
 \dom\frt_{q,y}^\Omg \;&:=\; H^1_0(\Omg) \\
 \frt_{q,y}^\Omg[u] 
	\;&:=\; 
	\|\nb u\|^2_{L^2(\Omg)}
	-q\int_\Omg\frac{\;|u(x)|^2}{|x-y|}\,\dd x\,.
\end{split}
\end{equation}
In $\frt_{q,y}^\Omg$, the symmetric perturbation term 
\[
	H^1_0(\Omg)\ni u\mapsto 
	\int_\Omg\frac{\;|u(x)|^2}{|x-y|}\,\dd x
\] 
is form-bounded with respect to the kinetic energy term $H^1_0(\Omg)\ni u \mapsto 	\|\nb u\|^2_{L^2(\Omg)}$ with bound $<1$ (see, e.g., \cite[Lemma VI.4.8b]{Kato-perturbation}).  Therefore, $\frt_{q,y}^\Omg$ is closed and semi-bounded and the subspace $C^\infty_0(\Omg)$ is a core for it.
As a consequence, $\frt_{q,y}^\Omg$ is the quadratic form of a uniquely determined self-adjoint and lower semi-bounded operator on $L^2(\Omg)$, which we shall denote by $\sfT_{q,y}^\Omg$.

Now, the embedding $H^1_0(\Omg)\subset L^2(\Omg)$ is compact (see, e.g., \cite[Theorem 1.4.3.2]{Grisvard-1985}), therefore by, e.g., \cite[Proposition 10.6]{schmu_unbdd_sa} the spectrum of $\sfT_{q,y}^\Omg$ is purely discrete. We denote by $\mu_1^q(\Omg,y)$ the lowest eigenvalue  of $\sfT_{q,y}^\Omg$.

\begin{thm}\label{thm:Coulomb-1body}
Let $\Omg\subset\dR^3$ be a bounded $C^\infty$-smooth domain and let $y\in \Omg$. Let $\cB\subset\dR^3$ be the ball centred at the origin $\bfo\in\dR^3$ and satisfying $|\cB| = |\Omg|$. Let $q\geq 0$. Then 
	\[
		\mu_1^q(\cB,\bfo) \le \mu_1^q(\Omg,y)\,.
	\]
\end{thm}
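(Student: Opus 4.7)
\textbf{Plan for the proof of Theorem \ref{thm:Coulomb-1body}.} The strategy is a direct symmetric-decreasing-rearrangement argument, fully analogous to the classical Faber--Krahn scheme applied at the level of the quadratic form $\frt_{q,y}^\Omg$: I take the ground state of $\sfT_{q,y}^\Omg$, rearrange it to a radial competitor in $H^1_0(\cB)$, show that the rearrangement does not increase the kinetic part nor decrease the Coulomb part in absolute value, and conclude by min--max. The key point is that the sign $q\geq 0$ makes the Coulomb term enter with the \emph{right} sign to be controlled by Hardy--Littlewood.

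First I would select a real-valued eigenfunction $u\in H^1_0(\Omg)$ of $\sfT_{q,y}^\Omg$ associated with $\mu_1^q(\Omg,y)$ (the spectrum being purely discrete, $u$ exists). Replacing $u$ by $|u|$ (which belongs to $H^1_0(\Omg)$ with $\|\nabla|u|\|_{L^2(\Omg)}=\|\nabla u\|_{L^2(\Omg)}$ and leaves $|u|^2$ unchanged) I may assume $u\ge 0$, and normalize $\|u\|_{L^2(\Omg)}=1$. Then I form $u^*\in H^1_0(\cB)$ as in Section~\ref{sec:preliminaries}, with $\|u^*\|_{L^2(\cB)}=1$, so that $u^*$ is an admissible test function for $\frt_{q,\bfo}^\cB$.

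Next I would compare the two form values term by term. For the kinetic part, Corollary~\ref{prop:rearr}\,(iii) gives directly
\[
\|\nabla u^*\|_{L^2(\cB)}^2 \;\le\; \|\nabla u\|_{L^2(\Omg)}^2 .
\]
For the Coulomb part, extend $u$ and $u^2$ by zero to $\dR^3$; the function $\dR^3\ni x\mapsto |x-y|^{-1}$ is non-negative and vanishes at infinity in the sense of~\eqref{eq:vanish} (its superlevel sets are balls), and its symmetric decreasing rearrangement is $x\mapsto |x|^{-1}$. The Hardy--Littlewood inequality (Proposition~\ref{prop:rearr0}\,(ii), applied to $u^2$ and $|\cdot-y|^{-1}$, combined with $(u^2)^*=(u^*)^2$ from Lemma~\ref{lem:rearr}\,(v)) then yields
\[
\int_\Omg \frac{|u(x)|^2}{|x-y|}\,\dd x \;\le\; \int_{\cB} \frac{|u^*(x)|^2}{|x|}\,\dd x .
\]
Multiplying by $-q\leq 0$ reverses the inequality, so adding the two estimates gives $\frt_{q,\bfo}^\cB[u^*]\le \frt_{q,y}^\Omg[u]$.

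Finally I would apply the min--max principle: since $\|u^*\|_{L^2(\cB)}=\|u\|_{L^2(\Omg)}=1$,
\[
\mu_1^q(\cB,\bfo) \;\le\; \frt_{q,\bfo}^\cB[u^*] \;\le\; \frt_{q,y}^\Omg[u] \;=\; \mu_1^q(\Omg,y) ,
\]
which is the claim. I do not foresee a serious obstacle here: both ingredients (P\'olya--Szeg\H{o} and Hardy--Littlewood) have already been quoted in Section~\ref{sec:preliminaries}, and the sign assumption $q\ge 0$ is exactly what aligns Hardy--Littlewood with the required direction. The only minor point to be careful about is the admissibility of $|\cdot-y|^{-1}$ as a function vanishing at infinity in the sense of~\eqref{eq:vanish}, which is immediate, and the harmless replacement of $u$ by $|u|$ to make it non-negative before rearranging.
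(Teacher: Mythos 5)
Your argument is correct and coincides with the paper's own proof: both take the (non-negative) ground state of $\sfT_{q,y}^\Omg$, rearrange it to a test function in $H^1_0(\cB)$, and combine P\'olya--Szeg\H{o} for the kinetic term with Hardy--Littlewood (plus $(u^2)^*=(u^*)^2$ and $(|\cdot-y|^{-1})^*=|\cdot|^{-1}$) for the Coulomb term before applying min--max. The only cosmetic difference is that you normalize $u$ and explicitly pass to $|u|$, while the paper simply asserts the ground state can be chosen non-negative.
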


\begin{remark}
 The choice $q=0$ in Theorem \ref{thm:Coulomb-1body} corresponds to the ordinary Faber-Krahn inequality.
\end{remark}

\begin{proof}[Proof of Theorem \ref{thm:Coulomb-1body}] Let us introduce the shorthand
\[
 \mu_\Omg \;:=\; \mu_1^q(\Omg,y)\,,\qquad \mu_\cB \;:= \;\mu_1^q(\cB,\bfo)\,.
\]
The thesis to prove is therefore $\mu_\cB\leq \mu_\Omg$.
We also set
\[
 V_{q,y}\colon\dR^3\arr\dR^+\,,\qquad V_{q,y}(x)\; :=\; \frac{q}{|x-y|}
\]
and observe that
\begin{equation}
 V_{q,y}^* \;=\; V_{q,\bfo}\,.
\end{equation}

	Let $u_1^q\in H^1_0(\Omg)$ be the eigenfunction
	of $\sfT_{q,y}^\Omg$ corresponding to its lowest
	eigenvalue $\mu_\Omg$. Non-restrictively, the function $u_1^q$ can be chosen to be real-valued and non-negative on $\Omg$, and moreover it satisfies		%
	\begin{equation}\label{eq:u0}
		\frt_{q,y}^\Omg[u_1^q] 
		\;=\;
		\mu_\Omg\|u_1^q\|^2_{L^2(\Omg)}\,.
	\end{equation}
	In the following we shall simply write $u:=u_1^q$.

	Let $\wt u\in H^1(\dR^3)$ be the extension of $u$ by zero.  Owing to Proposition \ref{prop:rearr0}\,(i) and (iii), also $\wt{u}^*\in H^1(\dR^3)$ with
	\begin{equation}\label{eq:u1}
	\begin{split}
	 \|\wt{u}^*\|_{L^2(\dR^3)}\;&=\;\|\wt{u}\|_{L^2(\dR^3)} \\
	 \|\nb \wt{u}^*\|_{L^2(\dR^3)} \;&\leq\;\|\nb \wt{u}\|_{L^2(\dR^3)}\,.
	\end{split}
	\end{equation}
	Moreover, owing to Lemma \ref{lem:rearr}\,(iv) and (v),
	\begin{equation}\label{eq:utildemore}
	 \begin{split}
	  \supp\wt{u}^* \;&\subset\;\cB \\
	  (\wt{u}^*)^2 \;&=\; (\wt{u}^2)^*\,.
	 \end{split}
	\end{equation}
       From $\wt{u}^* \in H^1(\dR^3)$ and $\supp\wt{u}^* \subset \cB$, we infer that
       \begin{equation}\label{eq:vtrunc}
        v \;:=\; \wt{u}^*|_\cB\;\in\; H^1_0(\cB)\,,
       \end{equation}
       and using \eqref{eq:utildemore} and the Hardy-Littlewood inequality (Proposition~\ref{prop:rearr0}\,(ii)) we find
	\begin{equation}\label{eq:u2}
		\int_{\dR^3}
		V_{q,\bfo}(x)(\wt{u}^*(x))^2\dd x
		 \;=\;
		\int_{\dR^3}
		V_{q,\bfo}(x)(\wt{u}^2)^*(x)\dd x
		\;\ge\;
		\int_{\dR^3}
		V_{q,y}(x) \wt{u}^2(x)\dd x\,.
	\end{equation}

	Now, using $v$ as a trial function for the Rayleigh quotient for the operator $\sfT_{q,\bfo}^\cB$, we finally find
		\[
	\begin{aligned}
		\mu_\cB  
		\;\le\; \frac{\frt_{q,\bfo}^\cB[v]}{\;\|v\|^2_{L^2(\cB)}}
	 	\;& =\;
	 	 \frac{\|\nb \wt{u}^*\|^2_{L^2(\dR^3)}
	 		- {\displaystyle\int_{\dR^3}}V_{q,\bfo}(x)
	 		|\wt{u}^*(x)|^2\dd x}{\|\wt{u}\|^2_{L^2(\dR^3)}}\\
		 \;& \le\;
		 \frac{\|\nb \wt{u}\|^2_{L^2(\dR^3)}
		 	- \displaystyle{\int_{\dR^3}}
		 	V_{q,y}|\wt{u}(x)|^2\dd x}{\|\wt{u}\|^2_{L^2(\dR^3)}}
		 \; =\;
		 \frac{\frt_{q,y}^\Omg[u]}{\;\|u\|^2_{L^2(\Omg)}}
		 \;=\;
		 \mu_\Omg\,,
	\end{aligned}
	\]
	where we used the min-max principle in the first step, \eqref{eq:u1} in the second, \eqref{eq:vtrunc}, \eqref{eq:u2} in the third, \eqref{key} in the fourth, and \eqref{eq:u0} in the last.
\end{proof}

\subsection{Two-body case}\label{ssec:2body}
Next, we examine the model for a two-body quantum system on a bounded domain with inter-particle attractive Coulomb interaction.


As before, let $\Omg\subset\dR^3$ be a $C^\infty$-smooth domain. For $q\in\dR$, we consider, on the Hilbert space $L^2(\Omega\times\Omega)$, the quadratic form
\begin{equation}\label{key2}
\begin{split}
 \dom\frt_q^\Omg \;&:=\; H^1_0(\Omg\tm \Omg) \\
 \frt_q^\Omg[u] 
	\;&:=\;
	\|\nb u\|^2_{L^2(\Omg\tm \Omg)}
	-q\int_{\Omg\tm\Omg}
	\frac{\,|u(x_1,x_2)|^2}{|x_1-x_2|}\,\dd x_1\dd x_2\,.
\end{split}
\end{equation}
In $\frt_q^\Omg$, the symmetric perturbation term 
\[
	H^1_0(\Omg\tm\Omg)\ni u\mapsto 
	\int_{\Omg\tm\Omg}
	\frac{\;|u(x_1,x_2)|^2}{|x_1-x_2|}
	\,\dd x_1 \dd x_2
\]
is form-bounded with respect to the kinetic energy term 
\[
	H^1_0(\Omg\tm\Omg)\ni u \mapsto 	\|\nb u\|^2_{L^2(\Omg\tm\Omg)}
\]
with the bound $<1$ (see, e.g., \cite[Lemma 4]{Kato-1951}). Therefore, $\frt_q^\Omg$ is closed and semi-bounded and the subspace $C^\infty_0(\Omg\times\Omega)$ is a core for it.
As a consequence, $\frt_q^\Omg$ is the quadratic form of a uniquely determined self-adjoint and lower semi-bounded operator on $L^2(\Omg)$, which we shall denote by $\sfT_q^\Omg$. As the embedding $H^1_0(\Omg\tm\Omg)\subset L^2(\Omg\tm\Omg)$ is compact, the spectrum of $\sfT_q^\Omg$ is purely discrete. We denote by $\nu_1^q(\Omg)$ the lowest eigenvalue.

\begin{thm}\label{thm:Coulomb2}
Let $\Omg\subset\dR^3$ be a $C^\infty$-smooth domain and let $\cB\subset\dR^3$ be a ball centred at the origin $\bfo\in\dR^3$ and satisfying $|\cB| = |\Omg|$. Let $q\geq 0$. Then
	\[
		2\mu_1^{q/2}(\cB,\bfo)\; \le\; \nu_1^q(\Omg)\,.
	\]	
\end{thm}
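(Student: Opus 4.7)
The plan is to mimic the argument of Theorem~\ref{thm:Coulomb-1body}, this time using the double Steiner rearrangement of Proposition~\ref{prop:rearr2} in place of the symmetric decreasing rearrangement, so as to transform the non-separable Coulomb kernel $1/|x_1-x_2|$ into a sum of two decoupled one-body Coulomb attractions centred at the origin of $\cB$, at the price of halving the coupling. Let $u\in H^1_0(\Omg\tm\Omg)$ be an eigenfunction of $\sfT_q^\Omg$ associated with $\nu_1^q(\Omg)$. Since $\sfT_q^\Omg$ commutes with the exchange $\tau\colon(x_1,x_2)\mapsto(x_2,x_1)$, and its ground state is simple and can be chosen strictly positive, I take $u$ real, non-negative, and symmetric: $u(x_1,x_2)=u(x_2,x_1)$ a.e. Extending $u$ by zero to $\wt u\in H^1(\dR^6)$ and setting $w:=S_2S_1\wt u$, Proposition~\ref{prop:rearr2} yields $w\in H^1(\dR^6)$ with $\mathrm{supp}\,w\subset\cB\tm\cB$, $\|w\|_{L^2(\dR^6)}=\|u\|_{L^2(\Omg\tm\Omg)}$, $\|\nb w\|_{L^2(\dR^6)}\le \|\nb u\|_{L^2(\Omg\tm\Omg)}$, and by the symmetry of $u$ also $w=S_1S_2\wt u$; in particular $w|_{\cB\tm\cB}\in H^1_0(\cB\tm\cB)$.

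The crucial analytic step is the control of the Coulomb term. For fixed $x_2\in\dR^3$, the function $x_1\mapsto 1/|x_1-x_2|$ has all positive level sets of finite measure, and its symmetric decreasing rearrangement in $x_1$ around the origin is exactly $x_1\mapsto 1/|x_1|$. Applying the Hardy-Littlewood inequality fibrewise in $x_1$, combined with Proposition~\ref{prop:rearr2}\,(ii), I obtain for a.e.\ $x_2$
\[
\int_{\dR^3}\frac{u(x_1,x_2)^2}{|x_1-x_2|}\,\dd x_1\;\le\;\int_{\dR^3}\frac{(S_1 u)(x_1,x_2)^2}{|x_1|}\,\dd x_1\,.
\]
Integrating in $x_2$ and using fibrewise equi-measurability of $S_2$ (the $x_2$-integral of $(S_1 u)^2$ at each $x_1$ coincides with that of $(S_2S_1 u)^2 = w^2$), the right-hand side becomes $\iint w(x_1,x_2)^2/|x_1|\,\dd x_1\dd x_2$. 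Exchanging the roles of $x_1$ and $x_2$, which is legitimate because both $u$ and the kernel are symmetric (and $S_1S_2\wt u = S_2S_1\wt u = w$ by Proposition~\ref{prop:rearr2}\,(iii)), yields the analogous bound with $1/|x_2|$. Averaging the two inequalities and multiplying by $q$ gives
\[
q\iint_{\Omg\tm\Omg}\!\frac{u(x_1,x_2)^2}{|x_1-x_2|}\,\dd x_1\dd x_2\;\le\;\iint_{\cB\tm\cB}w(x_1,x_2)^2\bigg(\frac{q/2}{|x_1|}+\frac{q/2}{|x_2|}\bigg)\dd x_1\dd x_2\,.
\]

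Combining this with the P\'olya-Szeg\H{o} bound $\|\nb w\|^2_{L^2(\cB\tm\cB)}\le \|\nb u\|^2_{L^2(\Omg\tm\Omg)}$, I conclude
\[
\frt_q^\Omg[u]\;\ge\;\|\nb w\|^2_{L^2(\cB\tm\cB)}-\iint_{\cB\tm\cB}w^2\bigg(\frac{q/2}{|x_1|}+\frac{q/2}{|x_2|}\bigg)\dd x_1\dd x_2\,.
\]
The right-hand side is precisely the quadratic form at $w$ of the tensor-sum operator $\sfT_{q/2,\bfo}^\cB\otimes\mathbbm{1}+\mathbbm{1}\otimes\sfT_{q/2,\bfo}^\cB$ on $L^2(\cB\tm\cB)\cong L^2(\cB)\otimes L^2(\cB)$, whose bottom of spectrum is $2\mu_1^{q/2}(\cB,\bfo)$. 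The min-max principle on $\cB\tm\cB$ therefore gives $\frt_q^\Omg[u]\ge 2\mu_1^{q/2}(\cB,\bfo)\|w\|^2=2\mu_1^{q/2}(\cB,\bfo)\|u\|^2_{L^2(\Omg\tm\Omg)}$, whence the thesis. The main obstacle is the non-separability of the Coulomb kernel in $(x_1,x_2)$: the key insight is that Steiner rearrangement in a single variable converts it into the translation-invariant $1/|x_1|$, and that the symmetry of the ground state allows averaging into a fully decoupled potential, which is precisely where the halving of the coupling constant appearing in the statement of Theorem~\ref{thm:Coulomb2} originates.
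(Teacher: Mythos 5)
Your argument is correct and follows essentially the same route as the paper: extend the symmetric, non-negative ground state by zero, apply the double Steiner rearrangement $w=S_1S_2\wt u=S_2S_1\wt u$, control the gradient via Pólya--Szeg\H{o}, control the Coulomb term by fibrewise Hardy--Littlewood after decomposing $q/|x_1-x_2|$ into two $q/2$-attractions (using the slice-wise equi-measurability of $S_2S_1\wt u$ versus $S_1\wt u$), and finish by min-max on the tensor-sum $\sfT_{q/2,\bfo}^\cB\otimes\mathbbm{1}+\mathbbm{1}\otimes\sfT_{q/2,\bfo}^\cB$. The only cosmetic difference is that you obtain the decoupled potential by averaging the two single-variable bounds (leaning on the exchange symmetry of both $u$ and the kernel), whereas the paper first splits $V_{q/2,\bfo}(x_1)+V_{q/2,\bfo}(x_2)$ and then applies Hardy--Littlewood to each summand separately; both devices are equivalent and both ultimately use $S_1S_2\wt u=S_2S_1\wt u$.
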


\begin{remark}
 With the choice $q=0$ in Theorem \ref{thm:Coulomb2}, namely when the two particles are uncoupled, the quantity $\nu_1^0(\Omg)$ is obviously twice as the ground state energy of a single particle confined in $\Omega$, namely, with Hamiltonian given by the Dirichlet Laplacian on $\Omega$. In this case Theorem \ref{thm:Coulomb2} implies
 \[
  2\lambda_1(\Omega)\;=\;\nu_1^0(\Omg)\;\geq\;2\mu_1^{0}(\cB,\bfo)\;=\;2\lambda_1(\cB)\,,
 \]
and one retrieves the ordinary Faber-Krahn inequality. 
\end{remark}

\begin{proof} Let us introduce the shorthand
\[
 \nu_\Omg \;:= \;\nu_1^q(\Omg)\,,\qquad \mu_\cB \;:= \;\mu_1^{q/2}(\cB,\bfo)\,.
\]
The thesis to prove is therefore $2\mu_\cB\leq \mu_\Omg$.
We also set 
\[
 V_{q,x_1}\colon\dR^3\arr\dR^+\,,\qquad V_{q,x_1}(x_2) \;:=\; \frac{q}{|x_1-x_2|}
\]
and observe that 
\begin{equation}\label{eq:Vrearranged}
 V_{q,x_1}^* \;=\; V_{q,\bfo}
\end{equation}
independently of $x_1$.
	
	Let  $u_1^q \in H^1_0(\Omg\tm\Omg)$ be the eigenfunction of $\sfT_q^\Omg$ corresponding to its lowest eigenvalue $\nu_\Omg$. Non-restrictively the function $u_1^q$ can be chosen to be real-valued and non-negative, and moreover it satisfies
	\begin{equation}\label{eq:u02}
		\frt_q^\Omg[u_1^q] \;=\;\nu_\Omg\|u_1^q\|^2_{L^2(\Omg\tm \Omg)}\,.
	\end{equation}
	It can also be easily shown that
	\begin{equation}\label{eq:symmetry}
		u_1^q(x_1,x_2)\;=\; u_1^q(x_2,x_1)\qquad\quad  (x_1,x_2\in\Omg)
	\end{equation}
	(bosonic and absolute ground state coincide).
	In the following we shall simply write $u:=u_1^q$.

	Let $\wt u\in H^1(\dR^3\times\mathbb{R}^3)$ be the extension of $u$ by zero, and let us consider the Steiner rearrangements $S_1\wt{u}$ and $S_2\wt{u}$ of $\wt u$ respectively with respect to the first and the second three-dimensional variable (as defined in \eqref{eq:Steiner}), as well as, correspondingly, the further rearrangements $S_1S_2\wt{u}$ and $S_2S_1\wt{u}$ with respect to the other variable. By construction,
	\begin{equation}\label{eq:supports}
         \begin{split}
          \supp (S_1\wt{u}) \;&\subset\; \cB\tm\Omg \\
          \supp (S_2\wt{u}) \;&\subset\; \Omg\tm\cB \\
          \supp(S_1S_2\wt{u}) \;&\subset\; \cB\tm\cB\,.
         \end{split}
        \end{equation}
        Moreover, as a consequence of the exchange symmetry \eqref{eq:symmetry} (Proposition \ref{prop:rearr2}\,(iii)),
	\begin{equation}\label{eq:bosonicsym}
	 S_1S_2\wt{u}\;=\; S_2S_1\wt{u}\,.
	\end{equation}

	Owing to Proposition \ref{prop:rearr2}, $S_1\wt{u}, S_2\wt{u},S_1S_2\wt{u}, S_2S_1\wt{u} \in H^1(\mathbb{R}^3\times\mathbb{R}^3)$ with 
	\begin{equation}\label{eq:u3}
	 \begin{split}
	 \|S_1S_2\wt{u}\|_{L^2(\mathbb{R}^3\times\mathbb{R}^3)}\;&=\;\|\wt{u}\|_{L^2(\mathbb{R}^3\times\mathbb{R}^3)} \\
	 \|\nb (S_1S_2\wt{u})\|_{L^2(\mathbb{R}^3\times\mathbb{R}^3)}\;&\leq\;\|\nb \wt{u}\|_{L^2(\mathbb{R}^3\times\mathbb{R}^3)}\,.
	 \end{split}
	\end{equation}

	Next, we observe that for (almost every) fixed $x_1\in \cB$ (and trivially for $x_1\in\mathbb{R}^3\setminus\cB$), the function $(S_1\widetilde{u})(x_1,\cdot)$ is square integrable on the whole $\mathbb{R}^3$ and its symmetric decreasing rearrangement is nothing but $(S_2S_1\widetilde{u})(x_1,\cdot)$. Thus, Proposition \ref{prop:rearr0}\,(i) gives
	\begin{equation}\label{eq:slice1}
	\begin{split}
	 \int_{\mathbb{R}^3}\dd x_2\,\big|(S_2S_1\widetilde{u})(x_1,x_2)\big|^2\;&=\;\int_{\mathbb{R}^3}\dd x_2\,\big|\big((S_1\widetilde{u})(x_1,\cdot)\big)^*(x_2)\big|^2 \\
	 &=\; \int_{\mathbb{R}^3}\dd x_2\,\big|(S_1\widetilde{u})(x_1,x_2)\big|^2\,,
	\end{split}
	\end{equation}
        and analogously,
        \begin{equation}\label{eq:slice2}
         \int_{\mathbb{R}^3}\dd x_1\,\big|(S_1S_2\widetilde{u})(x_1,x_2)\big|^2\;=\; \int_{\mathbb{R}^3}\dd x_2\,\big|(S_2\widetilde{u})(x_1,x_2)\big|^2\,.
        \end{equation}
       In turn, \eqref{eq:slice1}-\eqref{eq:slice2} (together with \eqref{eq:bosonicsym}) imply
	\begin{equation*}
	\begin{aligned}
		&\iint_{\mathbb{R}^3\times\mathbb{R}^3} |(S_1S_2\wt{u})(\bx)|^2
		\big[V_{\frac{q}{2},\bfo}(x_1) + V_{\frac{q}{2},\bfo}(x_2)\big]\dd x_1 \dd x_2 \\
		& \quad =
		\iint_{\mathbb{R}^3\times\mathbb{R}^3} |(S_1\wt{u})(\bx)|^2
		V_{\frac{q}{2},\bfo}(x_1)\dd x_1 \dd x_2
		+
		\iint_{\mathbb{R}^3\times\mathbb{R}^3} |(S_2\wt{u})(\bx)|^2
		V_{\frac{q}{2},\bfo}(x_2)\dd x_1 \dd x_2\,,
	\end{aligned}
	\end{equation*}
	having used the shorthand $\bx = (x_1,x_2)$. For each summand of the r.h.s.~above, the Hardy-Littlewood inequality (Proposition~\ref{prop:rearr0}\,(ii)) and \eqref{eq:Vrearranged} yield
	\[
	 \begin{split}
	  \int_{\dR^3}|(S_1\wt{u})(\bx)|^2
		V_{\frac{q}{2},\bfo}(x_1)\dd x_1\;&\geq\;\int_{\dR^3}|\wt{u}(\bx)|^2
		V_{\frac{q}{2},x_2}(x_1)\dd x_1 \\
	\int_{\dR^3}|(S_2\wt{u})(\bx)|^2
		V_{\frac{q}{2},\bfo}(x_2)\dd x_2\;&\geq\;\int_{\dR^3}|\wt{u}(\bx)|^2
		V_{\frac{q}{2},x_1}(x_2)\dd x_2\,,	
	 \end{split}
	\]
	and obviously $V_{\frac{q}{2},x_2}(x_1)+V_{\frac{q}{2},x_1}(x_2)=V_{q,x_1}(x_2)$, whence finally
	\begin{equation}\label{eq:u5}
	\begin{aligned} 
		& \iint_{\mathbb{R}^3\times\mathbb{R}^3} |(S_1S_2\wt{u})(\bx)|^2
		\big[V_{\frac{q}{2},\bfo}(x_1) + V_{\frac{q}{2},\bfo}(x_2)\big]\dd x_1 \dd x_2\\
		& \qquad \geq 
		\int_{\mathbb{R}^3\times\mathbb{R}^3}
		|\wt{u}(\bx)|^2
		V_{q,x_1}(x_2)\,\dd \bx\,.
	\end{aligned}
	\end{equation}

	Now, by construction (see \eqref{eq:supports} and \eqref{eq:u3} above),
	\begin{equation}\label{eq:newtrialv}
	 v \;:=\; (S_1S_2\wt{u})|_{\cB\tm\cB}\;\in\; H^1_0(\cB\tm\cB)\,.
	\end{equation}
        Using $v$ as a trial function for the Rayleigh quotient for the	self-adjoint operator 
	\[
		\sfT_{\frac{q}{2},\bfo}^\cB\otimes \mathbbm{1} + \mathbbm{1} \otimes\sfT_{\frac{q}{2},\bfo}^\cB
	\]
	acting in the Hilbert space $L^2(\cB\tm\cB)$ and we finally find
	\[
	\begin{aligned}
		2\mu_\cB \;& 
		\le\; \frac{\;\|\nb v\|^2_{L^2(\cB\tm\cB)}
			- {\displaystyle\int_{\cB\tm \cB}}
			|v(\bx)|^2\left[ V_{\frac{q}{2},\bfo}(x_1) + V_{\frac{q}{2},\bfo}(x_2)\right]\dd \bx\;}{\|v\|^2_{L^2(\cB\tm \cB)}}\\
		&=\; \frac{\;\|\nb (S_1S_2\wt{u})\|^2_{L^2(\mathbb{R}^3\times\mathbb{R}^3)}
		- {\displaystyle\int_{\mathbb{R}^3\times\mathbb{R}^3}}
		|(S_1S_2\wt{u})(\bx)|^2\left[ V_{\frac{q}{2},\bfo}(x_1) + V_{\frac{q}{2},\bfo}(x_2)\right]\dd \bx\;}{\|(S_1S_2\wt{u})\|^2_{L^2(\mathbb{R}^3\times\mathbb{R}^3)}}\\
		& \le\;
		\frac{\;\|\nb \wt{u}\|^2_{L^2(\mathbb{R}^3\times\mathbb{R}^3)}
		- \displaystyle{\int_{\mathbb{R}^3\times\mathbb{R}^3}}
		V_{q,x_1}(x_2)|\wt{u}(\bx)|^2\dd \bx\;}{\|\wt{u}\|^2_{L^2(\mathbb{R}^3\times\mathbb{R}^3)}} \\
		&=\;\frac{\frt_q^\Omg[u]}{\;\|u\|^2_{L^2(\Omg\tm\Omg)}}\;=\; \nu_\Omg\,,
	\end{aligned}
	\]
	where we used (twice) the min-max principle in the first step, \eqref{eq:newtrialv} in the second, \eqref{eq:u3} and \eqref{eq:u5} in the third, \eqref{key2} in the fourth, and \eqref{eq:u02} in the last.
\end{proof}

\subsection*{Acknowledgement}
We are most grateful to the International Centre for Mathematical Research FBK/CIRM Trento, 
under the auspices of which a large part of this project was carried on within a research in pairs programme. This work is also partially supported by the Alexander von Humboldt foundation. 


%

\def\cprime{$'$}

\bibliographystyle{alpha}

\end{document}